\documentclass[11pt]{article}

\usepackage[margin=1in]{geometry}
\usepackage{amsmath,amssymb,amsthm}
\usepackage{graphicx}
\usepackage{subcaption}
\usepackage[round,authoryear]{natbib}
\usepackage{newtxtext}
\usepackage[subscriptcorrection]{newtxmath}
\usepackage[colorlinks=true,linkcolor=blue,citecolor=blue,urlcolor=blue]{hyperref}

\newtheorem{theorem}{Theorem}
\newtheorem{corollary}{Corollary}
\theoremstyle{definition}
\newtheorem{definition}{Definition}
\theoremstyle{remark}
\newtheorem{remark}{Remark}

\newcommand{\map}{\varphi}
\newcommand{\cutmap}{\psi}
\newcommand{\dG}{d_G}
\newcommand{\dVI}{d_G^{\mathrm{VI}}}
\newcommand{\dVD}{d_G^{\mathrm{VD}}}
\newcommand{\dMK}{d_G^{(\kappa)}}

\newcommand{\HH}{H}
\newcommand{\II}{I}

\hypersetup{
  pdftitle={On Graph-Informed Distance Metrics for Comparing Graph Partitions},
  pdfauthor={Srijato Bhattacharyya, Huiyan Sang, and Bani Mallick},
  pdfkeywords={community detection, graph partition, Mirkin distance, stochastic block model, van Dongen metric, variation of information}
}

\title{On Graph-Informed Distance Metrics for Comparing Graph Partitions}
\author{Srijato Bhattacharyya, Huiyan Sang, and Bani Mallick\\[0.4em]
\small Department of Statistics, Texas A\&M University\\
\small 3143 TAMU, 155 Ireland Street, College Station, Texas 77843, U.S.A.\\
\small \texttt{srijato@tamu.edu, huiyan@stat.tamu.edu, bmallick@stat.tamu.edu}}
\date{}

\begin{document}
\maketitle

\begin{abstract}
Comparing graph partitions is fundamental to the analysis of network-structured data, yet existing measures for comparing graph partitions typically rely on graph-agnostic indices that treat vertices as exchangeable, ignoring the underlying graph topology that encodes essential information about community cohesion and separation. We propose a general construction of graph-informed distances that compares vertex partitions through induced edge partitions and yields valid metrics on the space of contiguous graph partitions. As special cases, we develop graph-informed versions of variation of information and the van Dongen distance together with a binary cut-based companion distance, and show that these distances satisfy a natural local graph-aware refinement criterion. Under stochastic block models, we prove that stronger topological disruptions incur asymptotically larger distances almost surely in both inter-community and intra-community split settings. These results provide a simple and principled framework to compare graph partitions while respecting the underlying graph structure.
\end{abstract}

\noindent\textbf{Keywords:} community detection; graph partition; Mirkin distance; stochastic block model; van Dongen metric; variation of information

\section{Introduction} 
The clustering of data with graph relations has numerous applications in network community detection \citep{geng2019probabilistic}, spatial clustering \citep{luo2021bayesian}, disease mapping \citep{datta2019spatial}, and image segmentation \citep{felzenszwalb2004efficient}. A distance metric between graph clusterings is a fundamental component for evaluating different graph clustering methods~\citep{meilua2007comparing}, deriving posterior summaries of Bayesian graph clustering results~\citep{wade2018bayesian}, conducting geometric analysis on the metric manifold of graph clustering~\citep{song2026non}, and constructing dependent models for multivariate graph clustering~\citep{paul2020random}.   

For graph clusterings \citep{lancichinetti2009community, bhattacharyya2025constrained}, standard comparisons have largely relied on measures developed for general set partitions. Widely used examples include the Rand index \citep{rand1971objective}, the adjusted Rand index \citep{hubert1985comparing}, the Jaccard coefficient, and the Fowlkes--Mallows index \citep{fowlkes1983method}. These criteria compare node assignments and are useful for measuring agreement in labels, but they are graph-agnostic: they treat vertices as exchangeable and ignore the topological role of edges.
A second line of work compares partitions through metric structure. \citet{dongen2000performance} introduced a bounded metric on the space of set partitions, and this distance has recently been extended and used as a loss over contiguous spatial partitions in Bayesian domain-partitioning problems \citep{huang2025consistent}. Information-theoretic comparisons are also prominent. \citet{meila2005comparing,meilua2007comparing,wade2018bayesian} developed the variation of information as a true metric on partitions, while \citet{vinh2009information} and \citet{romano2014standardized} studied related mutual-information-based normalizations and chance corrections. These approaches remain graph-agnostic when applied directly to graph clusterings. \citet{poulin2020comparing} proposed graph-aware similarities and dissimilarities based on the induced binary classification of edges into within-cluster and between-cluster types, and showed that graph-aware and graph-agnostic criteria can behave very differently under refinement and coarsening.
\citet{yan2025spatially} developed a spatially aware adjusted Rand index. While useful for graph-aware clustering comparison, these two methods do not provide a general construction of metrics on graph partitions or establish theoretical properties characterizing how these metrics respond to topological perturbations. 

Our contribution is to develop an edge-partition construction of graph-informed distances for graph partitions. The construction yields genuine metrics on the space of contiguous graph partitions and naturally accommodates graph-informed versions of variation of information and the van Dongen distance, together with a binary cut-based companion distance. We then formulate a local graph-aware refinement criterion for one-step refinements and show that these three distances satisfy it through distinct local scores. Finally, under stochastic block models (SBM), we establish two topology-aware properties, inter-community split awareness and intra-community split awareness, under which stronger structural disruptions incur larger distances almost surely.

\section{Construction of graph-informed metrics}\label{sec:construction}
For any finite non-empty set $X$, let $\Pi(X)$ denote the collection of all partitions of $X$, that is, the family of finite sets of non-empty, pairwise disjoint subsets whose union is $X$. Let $\phi=\phi_X:\Pi(X)\times\Pi(X)\to\mathbb{R}_{\ge 0}$ denote a distance metric on $\Pi(X)$.

Let $G=(V,E)$ be a simple undirected graph with $n=|V|$ vertices and $m=|E|>0$ edges. For a vertex partition $P=\{P_1,\ldots,P_K\}\in\Pi(V)$, let $E_i(P)=\{\{u,v\}\in E:u,v\in P_i\}$ for $i\geq 1$, and let $E_0(P)=\{\{u,v\}\in E:\text{$u$ and $v$ lie in different clusters of $P$}\}$. The sets $E_i(P)$ may be empty, in particular when $P_i$ is a singleton. For a graph $G$, let $G[S]$ denote the induced subgraph with vertex set $S\subseteq V$ and edge set $\{\{u,v\}\in E:u,v\in S\}$. A partition $P=\{P_1,\ldots,P_K\}\in\Pi(V)$ is called \emph{contiguous} if each $G[P_i]$ is connected, and let $\Pi_G(V)\subseteq\Pi(V)$ denote the set of contiguous graph partitions.

We use two graph-informed constructions. The first compares the full edge partition induced by $P$, with the between-cluster edge class distinguished. Let $E^\star=E\cup\{\star\}$, where $\star\notin E$ is an auxiliary element added to identify the between-cluster edge class, and define the marked edge partition induced by $P$ as 
\[
\map(P)=\{E_0(P)\cup\{\star\}\}\cup\{E_i(P):1\leq i\leq K,\ E_i(P)\neq\varnothing\}\in\Pi(E^\star).
\]

\begin{definition}[Graph-informed distance]
For vertex partitions $P,Q\in\Pi(V)$ and a metric $\phi=\phi_{E^\star}$ on $\Pi(E^\star)$, define
$
\dG(P,Q;\phi)=\phi\{\map(P),\map(Q)\}.
$
\end{definition}

\begin{theorem}[Valid graph-informed distance metric]\label{thm:metric}
If $|E|>0$, then $\dG(\cdot,\cdot;\phi)$ is a distance metric on $\Pi_G(V)$.
\end{theorem}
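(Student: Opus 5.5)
Since $\dG(P,Q;\phi)=\phi\{\map(P),\map(Q)\}$ is the pullback of the metric $\phi$ along $\map$, it inherits non-negativity, symmetry, the triangle inequality, and the implication $P=Q\Rightarrow \dG(P,Q;\phi)=0$ directly from $\phi$. The only remaining metric axiom is identity of indiscernibles: $\dG(P,Q;\phi)=0$ must force $P=Q$. Because $\phi$ is a metric on $\Pi(E^\star)$, $\dG(P,Q;\phi)=0$ gives $\map(P)=\map(Q)$. The whole proof therefore reduces to showing that $\map$ is injective on $\Pi_G(V)$.

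To prove injectivity, I will reconstruct $P$ from $\map(P)$ whenever $P$ is contiguous.

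\emph{Step 1: recover the cut set.} The block of $\map(P)$ containing $\star$ is exactly $E_0(P)\cup\{\star\}$. Hence $E_0(P)$ is determined by $\map(P)$, and so is $E\setminus E_0(P)=\bigcup_{i} E_i(P)$, the set of within-cluster edges.

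\emph{Step 2: recover $P$ from the within-cluster edges.} I claim the blocks of $P$ are precisely the vertex sets of the connected components of the spanning subgraph $H_P=(V,E\setminus E_0(P))$.
\begin{itemize}
\item Every edge of $H_P$ joins two vertices of the same cluster, so each component of $H_P$ lies inside a single block $P_i$.
\item Conversely, fix a block $P_i$. By contiguity $G[P_i]$ is connected, and all of its edges lie in $E_i(P)\subseteq E\setminus E_0(P)$. Hence $P_i$ is connected in $H_P$ and is contained in a single component.
\end{itemize}
Together these give that the components of $H_P$ are exactly the blocks $P_i$. Singleton blocks are covered as isolated vertices of $H_P$, since a single vertex is trivially connected. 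So $P$ is a function of $E_0(P)$ alone, and therefore of $\map(P)$. If $\map(P)=\map(Q)$, then $E_0(P)=E_0(Q)$, so $H_P=H_Q$ and $P=Q$.

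The main obstacle is Step 2, which is exactly where contiguity is needed. Without it, a disconnected block could not be told apart from its split into separate blocks. For example, on the path $a - b - c$, the partitions $\{\{a,c\},\{b\}\}$ and $\{\{a\},\{b\},\{c\}\}$ both have $E_0=E$. Both yield the single block $\{\star\}\cup E$, so $\map$ is not injective on $\Pi(V)$; this is why the theorem is stated on $\Pi_G(V)$. The hypothesis $|E|>0$ matters only for well-definedness: it makes $E^\star$ and the metric $\phi_{E^\star}$ meaningful, and it ensures $\map(P)$ is a genuine partition of $E^\star$ containing every edge.
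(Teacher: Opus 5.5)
Your proposal is correct and matches the paper's proof essentially step for step: the metric axioms other than identity of indiscernibles are inherited from $\phi$, $E_0(P)$ is read off from the block of $\map(P)$ containing $\star$, and contiguity makes the blocks of $P$ exactly the connected components of $(V,E\setminus E_0(P))$, so $\map$ is injective on $\Pi_G(V)$. Your path example correctly illustrates the paper's remark that $\dG$ is only a pseudo-metric on $\Pi(V)$; as a minor aside, $|E|>0$ is not actually needed for well-definedness, since $E^\star$ is non-empty even when $E=\varnothing$, and in that case $\Pi_G(V)$ is a single point anyway.
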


\begin{remark}
On the full space $\Pi(V)$, the map $\dG(\cdot,\cdot;\phi)$ is in general only a pseudo-metric, since distinct non-contiguous vertex partitions can induce the same marked edge partition.
\end{remark}

\noindent\textit{Examples.}
Taking $X=E^\star$, two choices of $\phi$ are the variation of information of \citet{meilua2007comparing},
\[
\phi_{\mathrm{VI}}(R,S)=\HH(R)+\HH(S)-2\II(R,S),
\]
and the van Dongen metric of \citet{dongen2000performance},
\[
\begin{aligned}
\phi_{\mathrm{VD}}(R,S)
={}&2-|X|^{-1}\biggl\{
\sum_i\max_j|R_i\cap S_j|\\
&\hspace{37mm}+\sum_j\max_i|R_i\cap S_j|
\biggr\}.
\end{aligned}
\]
Here
\[
\HH(R)=-\sum_i \frac{|R_i|}{|X|}\log\left(\frac{|R_i|}{|X|}\right)
\]
and
\[
\II(R,S)=\sum_{i,j}\frac{|R_i\cap S_j|}{|X|}
\log\left\{\frac{|R_i\cap S_j|\,|X|}{|R_i|\,|S_j|}\right\},
\]
with $0\log0=0$. The corresponding graph-informed distances are denoted by $\dVI$ and $\dVD$.

The second construction is a binary cut-based companion construction that records only whether an edge is between clusters. Let $\mathcal{P}(E)=\{S:S\subseteq E\}$, let $\rho:\mathcal{P}(E)\times\mathcal{P}(E)\to\mathbb{R}_{\geq0}$ be a distance metric on $\mathcal{P}(E)$, and let $\cutmap(P)=E_0(P)$.

\begin{definition}[Cut-based graph-informed distance]
For vertex partitions $P,Q\in\Pi(V)$, define
$
\dG(P,Q;\rho)=\rho\{\cutmap(P),\cutmap(Q)\}.
$
\end{definition}

\begin{theorem}[Valid cut-based graph-informed distance metric]\label{thm:cutmetric}
If $|E|>0$, then $\dG(\cdot,\cdot;\rho)$ is a distance metric on $\Pi_G(V)$.
\end{theorem}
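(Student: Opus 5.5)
Since $\rho$ is a metric on $\mathcal{P}(E)$, the quantity $\dG(P,Q;\rho)=\rho\{\cutmap(P),\cutmap(Q)\}$ inherits nonnegativity, symmetry and the triangle inequality directly: for $P,Q,R\in\Pi_G(V)$ we have $\rho\{\cutmap(P),\cutmap(R)\}\le\rho\{\cutmap(P),\cutmap(Q)\}+\rho\{\cutmap(Q),\cutmap(R)\}$, and $\dG(P,P;\rho)=0$. Hence $\dG(\cdot,\cdot;\rho)$ is a pseudo-metric on all of $\Pi(V)$. The only nontrivial point is identity of indiscernibles on $\Pi_G(V)$. Because $\rho$ separates points, $\dG(P,Q;\rho)=0$ forces $E_0(P)=E_0(Q)$. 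It therefore suffices to show that $\cutmap$ is injective on $\Pi_G(V)$, i.e.\ that a contiguous partition is determined by its set of cut edges.

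The key step is to recover $P$ from $E_0(P)$ when $P$ is contiguous. Let $H_P=(V,E\setminus E_0(P))$ be the graph of within-cluster edges. I claim the vertex sets of the connected components of $H_P$ are exactly the blocks of $P$. In one direction, every edge of $H_P$ joins two vertices of the same block, so each component lies inside a single block. In the other direction, each block $P_i$ induces a connected subgraph $G[P_i]$ by contiguity. All edges of $G[P_i]$ lie in $E_i(P)\subseteq E\setminus E_0(P)$, so $P_i$ is contained in a single component of $H_P$. This includes the singleton case, where there is nothing to check. Consequently $P$ equals the component partition of $H_P$, which depends only on $E_0(P)$. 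If $E_0(P)=E_0(Q)$, then $H_P=H_Q$ and so $P=Q$.

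This argument is essentially self-contained; the same component argument also underlies Theorem~\ref{thm:metric}, since $E_0(P)$ is recoverable from $\map(P)$ as the block containing $\star$. The main (mild) obstacle is stating the reconstruction carefully: contiguity is exactly what excludes two blocks whose union is connected through cut edges only, and equally excludes a block that splits into several components. Without contiguity, $\{\{1\},\{2\}\}$ and $\{\{1,2\}\}$ on an edgeless pair of vertices, or more generally non-contiguous blocks, would share the same cut set, which is why the result is stated on $\Pi_G(V)$. The hypothesis $|E|>0$ is only needed so that the setting is nondegenerate and matches Theorem~\ref{thm:metric}; the injectivity argument itself does not use it.
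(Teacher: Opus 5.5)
Your proof is correct and follows the paper's argument essentially verbatim: nonnegativity, symmetry and the triangle inequality are inherited from $\rho$, and $E_0(P)=E_0(Q)$ forces $P=Q$ because a contiguous partition is exactly the component partition of $(V,E\setminus E_0(P))$. One small correction concerns your closing aside. Contiguity does not exclude ``two blocks whose union is connected through cut edges only'': distinct blocks never merge in $H_P$, and two adjacent singletons, for example, are perfectly contiguous. The only failure it rules out is a block splitting into several components of $H_P$, equivalently merging two blocks with no edges between them, which leaves the cut set unchanged. This does not affect the proof.
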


\noindent\textit{Example.}
A choice of $\rho$ is the Mirkin distance \citep{mirkin1970measurement}, $\rho(S,T)=|S\triangle T|$, where $\triangle$ denotes the symmetric difference for $S,T\in\mathcal{P}(E)$. 
This leads to a graph-informed Mirkin distance,
$
\dMK(P,Q)=|E_0(P)\triangle E_0(Q)|
$, which is also the Hamming distance between the binary edge labels indicating within-block versus between-block status.

\section{Graph-aware properties of graph-informed metrics}\label{sec:properties}
We investigate the graph-aware properties of the proposed distances from two perspectives. First, we formulate a local graph-aware refinement criterion for comparing one-step refinements of a fixed contiguous partition and show that the graph-informed distances of Section~\ref{sec:construction} admit corresponding local refinement scores. We then establish related graph-aware properties in inter-community and intra-community split settings under stochastic block models.

\subsection{Local refinement criterion}
We compare the distance between a fixed graph partition and its local refinements by examining how the edge profile is recorded when one block is split into two connected pieces. Let $A\in\Pi_G(V)$. A partition $B\in\Pi_G(V)$ is called a one-step refinement of $A$ if there exist a block $C(B)\in A$ and two non-empty disjoint sets $C_1(B),C_2(B)\subset C(B)$ with $C(B)=C_1(B)\cup C_2(B)$ such that
$
B=(A\setminus\{C(B)\})\cup\{C_1(B),C_2(B)\}.
$
Let $\mathcal R(A)$ denote the collection of all such refinements. For any $C\subseteq V$, write $E(C)=\{\{u,v\}\in E:u,v\in C\}$. For $B\in\mathcal R(A)$, define
$
\mu(B)=|E(C(B))|.
$
Since $A\in\Pi_G(V)$ and $C(B)$ is split into two non-empty sets, $\mu(B)>0$. Let $e_{11}(B)$, $e_{22}(B)$ and $e_{12}(B)$ denote the numbers of edges in $E(C(B))$ with both endpoints in $C_1(B)$, both endpoints in $C_2(B)$, and one endpoint in each of $C_1(B)$ and $C_2(B)$, respectively. Thus
$
e_{11}(B)+e_{22}(B)+e_{12}(B)=\mu(B).
$
Define $\alpha_{11}(B)=e_{11}(B)/\mu(B)$, $\alpha_{22}(B)=e_{22}(B)/\mu(B)$ and $\alpha_{12}(B)=e_{12}(B)/\mu(B)$, and write $\alpha(B)=(\alpha_{11}(B),\alpha_{22}(B),\alpha_{12}(B))$. Swapping the labels of the two child blocks exchanges only the first two coordinates.

\begin{definition}[Local graph-aware refinement preference]
Fix $A\in\Pi_G(V)$. A distance $d$ on $\Pi_G(V)$ is said to satisfy the local graph-aware refinement preference criterion if there exists a score function $F_A:\mathcal R(A)\to\mathbb{R}$ such that, for every $B\in\mathcal R(A)$, the value $F_A(B)$ depends on $B$ only through $\mu(B)$ and $\alpha(B)$, up to relabelling of the two child blocks, and for any $B_1,B_2\in\mathcal R(A)$,
$
F_A(B_1)\le F_A(B_2)\iff d(A,B_1)\le d(A,B_2).
$
\end{definition}

This criterion ensures that, relative to a fixed base partition, the ordering of a one-step refinement induced by the distance depends only on the edge mass inside the block being split and by how that mass is redistributed by the split.

A metric $\phi$ on $\Pi(E^\star)$ is called permutation-invariant if $\phi(R,S)=\phi\{\sigma(R),\sigma(S)\}$ for every permutation $\sigma$ of $E^\star$ and every $R,S\in\Pi(E^\star)$. A metric $\rho$ on $\mathcal P(E)$ is called permutation-invariant if $\rho(U,W)=\rho\{\sigma(U),\sigma(W)\}$ for every permutation $\sigma$ of $E$ and every $U,W\in\mathcal P(E)$. The variation of information and van Dongen metrics are permutation-invariant on $\Pi(E^\star)$, and the Mirkin metric is permutation-invariant on $\mathcal P(E)$.

\begin{theorem}[Existence of local refinement scores]\label{thm:localscore}
Fix $A\in\Pi_G(V)$. Let $d$ be either $\dG(\cdot,\cdot;\phi)$ for a permutation-invariant metric $\phi$ on $\Pi(E^\star)$, or $\dG(\cdot,\cdot;\rho)$ for a permutation-invariant metric $\rho$ on $\mathcal P(E)$. Then $d$ satisfies the local graph-aware refinement preference criterion on $\mathcal R(A)$.
\end{theorem}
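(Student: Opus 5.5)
The plan is to show that $d(A,B)$ is itself a function of $(\mu(B),\alpha(B))$, symmetric under swapping $\alpha_{11}$ and $\alpha_{22}$. Once this holds, I will define $F_A(B)=d(A,B)$. Then $F_A$ depends on $B$ only through the allowed quantities, and the equivalence $F_A(B_1)\le F_A(B_2)\iff d(A,B_1)\le d(A,B_2)$ holds trivially. So the whole content is an invariance statement: if $B_1,B_2\in\mathcal R(A)$ have the same $\mu$ and the same $\alpha$ up to relabelling, then $d(A,B_1)=d(A,B_2)$. I will prove this by constructing a permutation $\sigma$ of $E^\star$ (or of $E$) that fixes the image of $A$ and carries the image of $B_1$ to that of $B_2$.

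First I will write out the images explicitly. Let $B\in\mathcal R(A)$ split $C=C(B)$ into $C_1,C_2$. Every block of $A$ other than $C$ appears unchanged in $B$. So $\map(B)$ has the following structure:
\begin{itemize}
\item the blocks $E(D)$ with $D\in A\setminus\{C\}$ and $E(D)\neq\varnothing$ are the same as in $\map(A)$;
\item $E(C)$ is replaced by the nonempty sets among $E(C_1)$ and $E(C_2)$;
\item the marked block becomes $E_0(A)\cup E_{12}\cup\{\star\}$, where $E_{12}$ is the set of edges of $E(C)$ joining $C_1$ and $C_2$.
\end{itemize}
Similarly, $\cutmap(A)=E_0(A)$ and $\cutmap(B)=E_0(A)\cup E_{12}$, a disjoint union.

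Next, take $B_1,B_2$ with the same $\mu$ and $\alpha$ up to relabelling, and relabel the children of $B_2$ so that $\alpha(B_1)=\alpha(B_2)$ coordinatewise. Equal $\mu$ and equal $\alpha$ give $e_{11}(B_1)=e_{11}(B_2)$, $e_{22}(B_1)=e_{22}(B_2)$ and $e_{12}(B_1)=e_{12}(B_2)$. The split blocks $C(B_1)$ and $C(B_2)$ may be different blocks of $A$; this is the main obstacle. In that case $|E(C(B_1))|=|E(C(B_2))|=\mu$, and I will build $\sigma$ as follows:
\begin{itemize}
\item $\sigma$ swaps $E(C(B_1))$ with $E(C(B_2))$;
\item within this swap, it sends the three edge classes $E(C_1(B_1))$, $E(C_2(B_1))$, $E_{12}(B_1)$ bijectively onto the corresponding classes of $B_2$, which is possible because their sizes match;
\item it maps $E(C(B_2))$ back onto $E(C(B_1))$ by any bijection;
\item it fixes every other edge and $\star$.
\end{itemize}
Then $\sigma$ permutes the blocks of $\map(A)$: it exchanges $E(C(B_1))$ and $E(C(B_2))$, both nonempty, and fixes the rest, including the marked block $E_0(A)\cup\{\star\}$. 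So $\sigma(\map(A))=\map(A)$. To check $\sigma(\map(B_1))=\map(B_2)$, note the following:
\begin{itemize}
\item $\sigma$ carries the children's edge sets of $B_1$ to those of $B_2$, with empty sets matching empty sets;
\item it carries $E(C(B_2))$, which is an intact block in $B_1$, to $E(C(B_1))$, which is an intact block in $B_2$;
\item it sends the marked block $E_0(A)\cup E_{12}(B_1)\cup\{\star\}$ to $E_0(A)\cup E_{12}(B_2)\cup\{\star\}$;
\item it fixes all other blocks.
\end{itemize}
When $C(B_1)=C(B_2)$ the same construction applies, with the swap reduced to a permutation inside $E(C)$.

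Permutation invariance of $\phi$ then gives
\[
\dG(A,B_1;\phi)=\phi\{\sigma(\map(A)),\sigma(\map(B_1))\}=\phi\{\map(A),\map(B_2)\}=\dG(A,B_2;\phi).
\]
For the cut construction, the same $\sigma$ restricted to $E$ fixes $E_0(A)$ and maps $E_{12}(B_1)$ onto $E_{12}(B_2)$. Permutation invariance of $\rho$ then yields the analogous equality, and for this case only $e_{12}$ even matters.

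The one point needing care is whether $d(A,B)$ might also depend on the sizes $|E(C_i)|$ in absolute terms rather than through $\alpha$. It does not, since $e_{ij}=\mu\,\alpha_{ij}$. This completes the plan.
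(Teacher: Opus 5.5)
Your proposal is correct and follows the paper's proof essentially step for step. You reduce to showing $d(A,B_1)=d(A,B_2)$ whenever the local counts agree up to relabelling, build a permutation of $E^\star$ that fixes $\varphi(A)$ and carries $\varphi(B_1)$ to $\varphi(B_2)$ (with the cases $C(B_1)=C(B_2)$ and $C(B_1)\neq C(B_2)$, swapping the two within-block edge sets in the latter), invoke permutation invariance, and set $F_A(B)=d(A,B)$. The differences are only cosmetic: you reuse $\sigma$ restricted to $E$ for the cut-based case instead of a separate permutation fixing $E_0(A)$, and you allow any bijection $E(C(B_2))\to E(C(B_1))$ in place of $f^{-1}$, both of which are harmless.
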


\noindent\textit{Examples.}
Let $\eta_A=|E_0(A)|+1$, and let $h_2(t)=-t\log t-(1-t)\log(1-t)$. For the graph-informed Mirkin distance,
$
F_A^{(\kappa)}(B)=e_{12}(B)=\mu(B)\alpha_{12}(B).
$
For the graph-informed variation of information distance,
\[
F_A^{\mathrm{VI}}(B)=\mu(B)\HH\{\alpha(B)\}+\{\eta_A+e_{12}(B)\}h_2\!\left\{\frac{e_{12}(B)}{\eta_A+e_{12}(B)}\right\},
\]
where $\HH\{\alpha(B)\}=-\alpha_{11}(B)\log\alpha_{11}(B)-\alpha_{22}(B)\log\alpha_{22}(B)-\alpha_{12}(B)\log\alpha_{12}(B)$. For the graph-informed van Dongen distance,
\[
F_A^{\mathrm{VD}}(B)=\eta_A+\mu(B)+e_{12}(B)-\max\{e_{11}(B),e_{22}(B),e_{12}(B)\}-\max\{\eta_A,e_{12}(B)\}.
\]

Although the three scores appear dissimilar, each depends on the refinement only through the split profile $(\mu(B),e_{11}(B),e_{22}(B),e_{12}(B))$ together with the fixed quantity $\eta_A$, and each aggregates this profile in the manner of a distinct classical family of partition comparison. The Mirkin score retains only the cut coordinate $e_{12}(B)$, charging for severed edges alone and remaining indifferent to how the surviving mass is apportioned between the two children.
The variation of information score is an entropy functional of the full profile: the term
$\mu(B)\HH\{\alpha(B)\}$ is maximized, for fixed $\mu(B)$, by a balanced
three-way redistribution of the within-block mass and therefore assigns larger
values to stronger internal reorganization, while
$\{\eta_A+e_{12}(B)\}h_2\{e_{12}(B)/(\eta_A+e_{12}(B))\}$ records the
conditional uncertainty created when the newly cut edges are merged with the
pre-existing marked cut block. The van Dongen score is a dominant-overlap functional; when $\eta_A\geq e_{12}(B)$ it reduces to $\mu(B)+e_{12}(B)-\max\{e_{11}(B),e_{22}(B),e_{12}(B)\}$, retaining only the largest surviving fragment and penalizing near-even redistributions most heavily. A single split is thus scored through three complementary lenses -- cut mass, entropy of redistribution, and loss of dominant overlap.

\subsection{Inter-community split awareness}
Let $G=(V,E)$ follow a three-block SBM \citep{holland1983stochastic, abbe2018community} with $V=g_1\cup g_2\cup g_3$, $g_i\cap g_j=\varnothing$ for $i\neq j$, $|g_i|=n_i$, $N=n_1+n_2+n_3$, and $n_i/N\to \lambda_i\in(0,1)$ with $\lambda_1+\lambda_2+\lambda_3=1$. Conditional on the block memberships, edges are independent with probability $p_i$ for pairs inside $g_i$ and $q_{ij}$ for pairs between $g_i$ and $g_j$, where $0<p_i<1$ and $0<q_{ij}<1$. We compare the one-block partition $A=\{V\}$ with $B_1=\{g_1,\ g_2\cup g_3\}$ and $B_2=\{g_3,\ g_1\cup g_2\}$. For compactness, write $W_1=\lambda_1^2p_1/2$, $W_3=\lambda_3^2p_3/2$, $U_{23}=\lambda_2^2p_2/2+\lambda_3^2p_3/2+\lambda_2\lambda_3q_{23}$, $U_{12}=\lambda_1^2p_1/2+\lambda_2^2p_2/2+\lambda_1\lambda_2q_{12}$, $C_{1\cdot}=\lambda_1\lambda_2q_{12}+\lambda_1\lambda_3q_{13}$, $C_{\cdot 3}=\lambda_2\lambda_3q_{23}+\lambda_1\lambda_3q_{13}$, and $T=W_1+\lambda_2^2p_2/2+W_3+\lambda_1\lambda_2q_{12}+\lambda_2\lambda_3q_{23}+\lambda_1\lambda_3q_{13}$.

\begin{theorem}[Inter-community split awareness]\label{thm:inter}
Under the three-block model above, the following almost sure limits hold.

\noindent(i) If $\lambda_1q_{12}>\lambda_3q_{23}$, then
$
N^{-2}\{\dMK(A,B_1)-\dMK(A,B_2)\}\to \delta_{\kappa}^{\mathrm{inter}}>0
$
almost surely.

\noindent(ii) If $U_{23}>\max\{C_{1\cdot},W_1\}$, $U_{12}>\max\{C_{\cdot 3},W_3\}$, $\lambda_1^2p_1 \geq \lambda_3^2p_3$, and $\lambda_1q_{12}>\lambda_3q_{23}$, then
$
\dVI(A,B_1)-\dVI(A,B_2)\to \delta_{\mathrm{VI}}^{\mathrm{inter}}>0
$
almost surely.

\noindent(iii) If $U_{23}>\max\{C_{1\cdot},W_1\}$, $U_{12}>\max\{C_{\cdot 3},W_3\}$, and $W_1+\lambda_1\lambda_2q_{12}>W_3+\lambda_2\lambda_3q_{23}$, then
$
\dVD(A,B_1)-\dVD(A,B_2)\to \delta_{\mathrm{VD}}^{\mathrm{inter}}>0
$
almost surely.
\end{theorem}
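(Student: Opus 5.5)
Everything reduces to the edge counts of the SBM. Let $e_i$ be the number of edges inside $g_i$ and $c_{ij}$ the number between $g_i$ and $g_j$. These six counts are independent binomials with $\binom{n_i}{2}$ or $n_in_j$ trials. By the strong law of large numbers for binomial counts (a Bernstein or Hoeffding bound plus Borel--Cantelli, since $N^{-2}$ times a binomial with $\Theta(N^2)$ trials concentrates at an exponential rate), we get almost surely
\[
N^{-2}e_i\to\lambda_i^2p_i/2,\qquad N^{-2}c_{ij}\to\lambda_i\lambda_jq_{ij},\qquad N^{-2}m\to T.
\]
Under $A=\{V\}$ we have $E_0(A)=\varnothing$, so $\map(A)=\{\{\star\},E\}$ and $\eta_A=1$. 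Both $B_1$ and $B_2$ lie in $\mathcal R(A)$, but to get the stated limits I would compute the distances directly rather than only through the local scores. Contiguity of $B_1,B_2$ holds almost surely eventually, and it is not in fact needed, since the distances are defined on all of $\Pi(V)$.

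\textbf{Part (i).} We have $\dMK(A,B_1)=|E_0(B_1)|=c_{12}+c_{13}$ and $\dMK(A,B_2)=c_{13}+c_{23}$. The difference is $c_{12}-c_{23}$, and after scaling by $N^{-2}$ it converges to $\lambda_2(\lambda_1q_{12}-\lambda_3q_{23})>0$.

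\textbf{Part (ii).} With $X=E^\star$ and $|X|=m+1$:
\[
\map(B_1)=\{E_0\cup\{\star\},\,E(g_1),\,E(g_2\cup g_3)\},
\]
with sizes $1+c_{12}+c_{13}$, $e_1$, $e_2+e_3+c_{23}$. The contingency table of $\map(A)$ against $\map(B_1)$ is simple: $\{\star\}$ lies inside the cut block, and $E$ splits into three pieces. Since $\map(A)$ is coarse, $\phi_{\mathrm{VI}}=H(R\mid S)+H(S\mid R)$ can be written explicitly. $H(\map(A)\mid\map(B_1))$ is the contribution of the cut block containing $\star$, namely $(m+1)^{-1}s\,h_2(1/s)$ with $s=1+c_{12}+c_{13}$, which is $O(\log N/N^2)\to0$. $H(\map(B_1)\mid\map(A))$ equals $\tfrac{m}{m+1}$ times the entropy of the three-way split of $E$. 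Its proportions converge to $(C_{1\cdot},W_1,U_{23})/T$, so $\dVI(A,B_1)\to H_3(C_{1\cdot},W_1,U_{23})/T$, and similarly $\dVI(A,B_2)\to H_3(C_{\cdot3},W_3,U_{12})/T$. Continuity of entropy on the simplex justifies these limits.

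The main obstacle is showing that
\[
H_3(C_{1\cdot},W_1,U_{23})>H_3(C_{\cdot3},W_3,U_{12}).
\]
Both triples sum to $T$, and the relevant differences are
\[
C_{1\cdot}-C_{\cdot3}=\lambda_2(\lambda_1q_{12}-\lambda_3q_{23})>0,\qquad W_1-W_3\ge0,
\]
so $U_{23}-U_{12}=-(C_{1\cdot}-C_{\cdot3})-(W_1-W_3)<0$. The hypotheses make $U$ the largest coordinate in each triple. Thus the first triple is obtained from the second by moving mass out of the largest coordinate into the two smaller ones. I would argue this through Schur-concavity of entropy: it suffices that the first triple is majorized by the second. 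This amounts to checking $U_{23}\le U_{12}$, which holds, and that $U_{23}$ plus the larger of $C_{1\cdot},W_1$ is at most $U_{12}$ plus the larger of $C_{\cdot3},W_3$. This second inequality follows because the smallest coordinate increases: both small coordinates weakly increase, so their minimum does. Strict inequality comes from strict Schur-concavity, since $U_{23}<U_{12}$ makes the vectors not permutations of each other.

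\textbf{Part (iii).} For van Dongen, the row maxima of the table of $\map(A)$ against $\map(B_1)$ are $1$ for $\{\star\}$ and $\max\{C_{1\cdot},W_1,U_{23}\}N^2(1+o(1))=U_{23}N^2(1+o(1))$ for $E$. The column maxima sum to $m$, up to an additive $1$ coming from the $\star$ cell. Hence
\[
\dVD(A,B_1)\to 2-(T+U_{23})/T=(C_{1\cdot}+W_1)/T,
\]
and likewise $\dVD(A,B_2)\to(C_{\cdot3}+W_3)/T$. The difference is $(W_1+\lambda_1\lambda_2q_{12}-W_3-\lambda_2\lambda_3q_{23})/T>0$ by hypothesis, because the shared term $\lambda_1\lambda_3q_{13}$ cancels.
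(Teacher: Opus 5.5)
Your proof is correct and follows the paper's proof closely. Parts (i) and (iii) and the computation of the variation of information limits match it step for step:
\begin{itemize}
\item the same edge-count bookkeeping;
\item the same decomposition $\dVI(A,B_k)=\HH\{\map(B_k)\mid\map(A)\}+\HH\{\map(A)\mid\map(B_k)\}$, with the marker term of order $\log N/N^2$;
\item the same row and column maxima calculation, which reduces $\dVD(A,B_k)$ to one minus the share of the dominant merged-block class.
\end{itemize}

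The one real difference is how you show $\delta_{\mathrm{VI}}^{\mathrm{inter}}>0$. You show that $(C_{1\cdot},W_1,U_{23})/T$ is majorized by $(C_{\cdot3},W_3,U_{12})/T$ and then invoke strict Schur-concavity of Shannon entropy. The paper proves the same inequality from scratch. It makes two explicit mass transfers into the dominant coordinate, first $\Delta_a=W_1-W_3$ and then $\Delta_c=C_{1\cdot}-C_{\cdot3}$, and controls each by the derivative of $x\log x$. This is a hands-on instance of the same majorization principle, so the mathematical content is the same. Your version is shorter and shows plainly why the dominance hypotheses are needed: they fix the sorted order, so the partial-sum checks reduce to $U_{23}<U_{12}$ and monotonicity of the minimum. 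The paper's version avoids appealing to an external theorem.

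Your Hoeffding or Bernstein plus Borel--Cantelli justification of the almost sure limits is also a little more careful than the paper's bare appeal to the strong law. Graphs of different sizes need not be coupled, and your argument does not need them to be.
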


\begin{table}
\centering
\footnotesize
\caption{Almost sure limits in Theorem~\ref{thm:inter} and Corollary~\ref{cor:inter-eq}. The general column gives the limit under the three-block model; the equal-sized column specializes to $\lambda_1=\lambda_2=\lambda_3=1/3$ and $p_1=p_2=p_3=p$, with $D:=D_{\mathrm{eq},3}=3p+2(q_{12}+q_{23}+q_{13})$.}
\label{tab:inter-limits}
\begin{tabular}{p{0.16\textwidth}p{0.40\textwidth}p{0.36\textwidth}}
\hline
Distance & Limit (general) & Limit (equal-sized) \\
\hline
Mirkin &
$\delta_{\kappa}^{\mathrm{inter}}=\lambda_2(\lambda_1q_{12}-\lambda_3q_{23})$ &
$\delta_{\kappa}^{\mathrm{inter,eq}}=(q_{12}-q_{23})/9$ \\[6pt]

Variation of information &
$\begin{aligned}[t]
\delta_{\mathrm{VI}}^{\mathrm{inter}}
={}&
\HH(W_1/T,U_{23}/T,C_{1\cdot}/T)\\
&-\HH(W_3/T,U_{12}/T,C_{\cdot3}/T)
\end{aligned}$ &
$\begin{aligned}[t]
\delta_{\mathrm{VI}}^{\mathrm{inter,eq}}
={}&
\HH\{\tfrac{p}{D},\tfrac{2(p+q_{23})}{D},\tfrac{2(q_{12}+q_{13})}{D}\}\\
&-\HH\{\tfrac{p}{D},\tfrac{2(p+q_{12})}{D},\tfrac{2(q_{23}+q_{13})}{D}\}
\end{aligned}$ \\[18pt]

van Dongen &
$\delta_{\mathrm{VD}}^{\mathrm{inter}}=(U_{12}-U_{23})/T$ &
$\delta_{\mathrm{VD}}^{\mathrm{inter,eq}}=2(q_{12}-q_{23})/D$ \\
\hline
\end{tabular}
\end{table}

Theorem~\ref{thm:inter} shows that the three distances reflect different local manifestations of the same structural comparison between the two refinements. For the Mirkin distance, the only relevant quantity is the excess cut load created by separating $g_1$ from $g_2\cup g_3$ rather than $g_3$ from $g_1\cup g_2$. For variation of information, the comparison depends on the full three-way marked edge profile induced by each refinement, and positivity follows when mass is transferred from the outer edge classes to the dominant merged-block class. For van Dongen distance, the dominance conditions identify the asymptotically largest induced marked edge block under each refinement, after which the ordering is determined by the relative sizes of the two merged-block masses. The assumptions in Theorem~\ref{thm:inter} have simple structural interpretations. For the graph-informed Mirkin distance, the condition $\lambda_1 q_{12}>\lambda_3 q_{23}$ says that, after weighting by block sizes, separating $g_1$ from $g_2$ destroys more cross-community connectivity than separating $g_2$ from $g_3$, so $B_1$ incurs the larger cut load. For graph-informed variation of information, the dominance conditions $U_{23}>\max\{C_{1\cdot},W_1\}$ and $U_{12}>\max\{C_{\cdot 3},W_3\}$ ensure that the merged-block marked edge class is asymptotically the largest under each refinement, while $\lambda_1^2 p_1\geq \lambda_3^2 p_3$ together with $\lambda_1 q_{12}>\lambda_3 q_{23}$ imply that $B_1$ has a less concentrated three-way edge profile than $B_2$, and hence the larger entropy-based distance. For graph-informed van Dongen distance, the same dominance conditions again identify the merged-block marked edge class as the maximizer, and $W_1+\lambda_1\lambda_2 q_{12}>W_3+\lambda_2\lambda_3 q_{23}$ states that the merged side under $B_2$ is asymptotically larger than that under $B_1$, which determines the ordering.

\begin{corollary}[Equal-sized inter-community case]\label{cor:inter-eq}
Assume the setting of Theorem~\ref{thm:inter} with $\lambda_1=\lambda_2=\lambda_3=1/3$ and $p_1=p_2=p_3=p$. If $q_{12}>q_{23}$, $p+q_{23}>q_{12}+q_{13}$, and $p+q_{12}>q_{23}+q_{13}$, then the conclusions of Theorem~\ref{thm:inter} hold for all three graph-informed distances. More precisely,
$
N^{-2}\{\dMK(A,B_1)-\dMK(A,B_2)\}\to \delta_{\kappa}^{\mathrm{inter,eq}}>0,
$
$
\dVI(A,B_1)-\dVI(A,B_2)\to \delta_{\mathrm{VI}}^{\mathrm{inter,eq}}>0,
$
and
$
\dVD(A,B_1)-\dVD(A,B_2)\to \delta_{\mathrm{VD}}^{\mathrm{inter,eq}}>0
$
almost surely.
\end{corollary}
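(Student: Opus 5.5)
The plan is to derive Corollary~\ref{cor:inter-eq} directly from Theorem~\ref{thm:inter}. We specialize $\lambda_i=1/3$ and $p_i=p$, check that the hypotheses of parts (i)--(iii) follow from the three stated inequalities, and then evaluate the limits listed in Table~\ref{tab:inter-limits}.

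\textbf{Specialized quantities.} With $\lambda_i=1/3$ and $p_i=p$ we get
\[
W_1=W_3=p/18,\qquad U_{23}=(2p+2q_{23})/18,\qquad U_{12}=(2p+2q_{12})/18,
\]
\[
C_{1\cdot}=(2q_{12}+2q_{13})/18,\qquad C_{\cdot 3}=(2q_{23}+2q_{13})/18,\qquad T=D/18,
\]
where $D=3p+2(q_{12}+q_{23}+q_{13})$.

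\textbf{Checking the hypotheses.} For (i), the condition $\lambda_1q_{12}>\lambda_3q_{23}$ reduces to $q_{12}>q_{23}$. The dominance condition $U_{23}>C_{1\cdot}$ is exactly $p+q_{23}>q_{12}+q_{13}$, and $U_{12}>C_{\cdot3}$ is exactly $p+q_{12}>q_{23}+q_{13}$. The remaining dominance conditions $U_{23}>W_1$ and $U_{12}>W_3$ hold automatically, since $2p+2q>p$ whenever $p,q>0$. Next, $\lambda_1^2p_1\ge\lambda_3^2p_3$ holds with equality. Finally, $W_1+\lambda_1\lambda_2q_{12}>W_3+\lambda_2\lambda_3q_{23}$ reduces to $q_{12}>q_{23}$. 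So all hypotheses of parts (i)--(iii) hold.

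\textbf{Evaluating the limits.}
\begin{itemize}
\item Mirkin: $\lambda_2(\lambda_1q_{12}-\lambda_3q_{23})=(q_{12}-q_{23})/9$.
\item Variation of information: the common factor $1/18$ cancels in each ratio with $T$. This gives $\HH\{p/D,\,2(p+q_{23})/D,\,2(q_{12}+q_{13})/D\}-\HH\{p/D,\,2(p+q_{12})/D,\,2(q_{23}+q_{13})/D\}$.
\item van Dongen: $(U_{12}-U_{23})/T=2(q_{12}-q_{23})/D$.
\end{itemize}

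\textbf{Positivity.} Positivity of all three limits is inherited from Theorem~\ref{thm:inter}. It can also be read off directly: the Mirkin and van Dongen limits are positive since $q_{12}>q_{23}$.

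For the variation of information limit, both triples share the first coordinate $p/D$. By the dominance inequalities, the second coordinate is the largest in each triple. Going from the first triple to the second moves mass $2(q_{12}-q_{23})/D>0$ from the third coordinate into the already-largest second coordinate. This is a strict majorization step, so by strict Schur-concavity of entropy the first entropy is strictly larger.

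This last step is the only one needing care, namely confirming the majorization direction. It is already covered by part (ii) of Theorem~\ref{thm:inter}, so the corollary follows.
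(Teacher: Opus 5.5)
Your proposal is correct and is the intended argument: specialize Theorem~\ref{thm:inter} to $\lambda_i=1/3$, $p_i=p$, check that its hypotheses reduce to the stated inequalities (with $U_{23}>W_1$, $U_{12}>W_3$ and $\lambda_1^2p_1\ge\lambda_3^2p_3$ automatic), and read off the limits in Table~\ref{tab:inter-limits}. Your majorization argument for the sign of $\delta_{\mathrm{VI}}^{\mathrm{inter,eq}}$ is the $\Delta_a=0$ case of the convexity argument in the appendix proof of part (ii), and it needs only $p+q_{23}>q_{12}+q_{13}$; in fact the third hypothesis $p+q_{12}>q_{23}+q_{13}$ already follows from the first two.
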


\begin{remark}
Corollary~\ref{cor:inter-eq} gives the clean interpretation under the equal-sized case. The refinements $B_1=\{g_1,\ g_2\cup g_3\}$ and $B_2=\{g_3,\ g_1\cup g_2\}$ have the same vertex-level form relative to $A=\{V\}$: each isolates one group and merges the other two. Any graph-unaware distance depending only on block sizes or label assignments therefore assigns the same distance from $A$ to $B_1$ and $B_2$. By contrast, the graph-informed distances detect that the cut separating $g_1$ from $g_2$ is stronger than the cut separating $g_2$ from $g_3$ when $q_{12}>q_{23}$, and hence rank $B_1$ as farther from $A$. Fig.~\ref{fig:inter} provides a visual summary of the inter-community comparison.
\end{remark}

\begin{figure}
\centering
\begin{subfigure}[t]{0.48\textwidth}
\centering
\includegraphics[width=\textwidth]{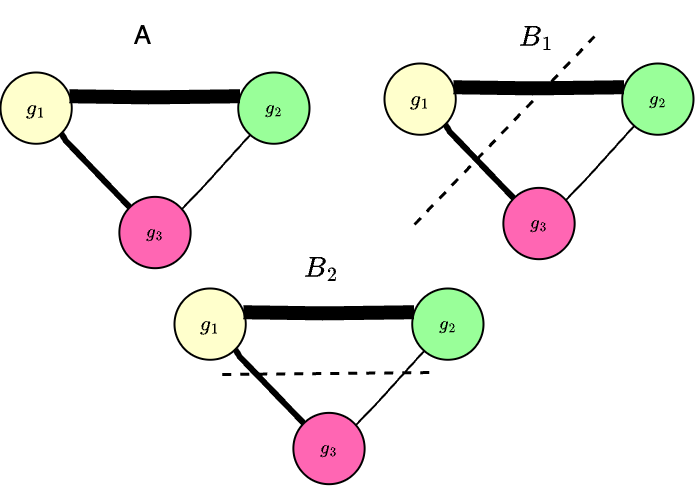}
\caption{Three-block (inter-community) setting.}
\label{fig:inter}
\end{subfigure}
\hfill
\begin{subfigure}[t]{0.48\textwidth}
\centering
\includegraphics[width=\textwidth]{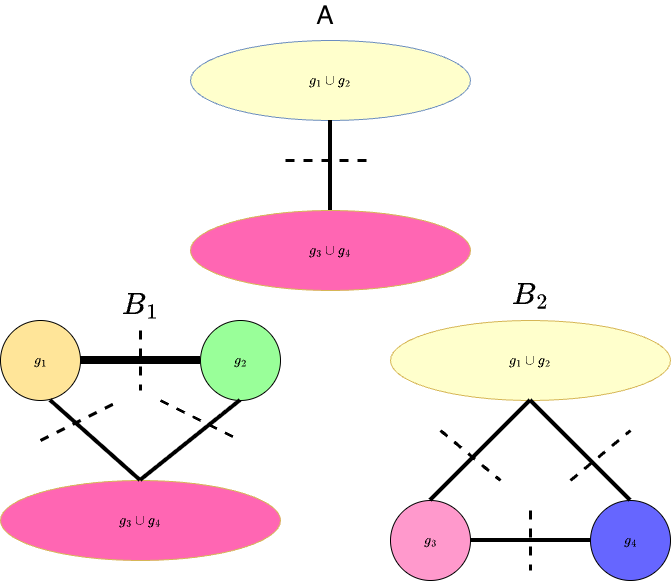}
\caption{Two-block (intra-community) setting.}
\label{fig:intra}
\end{subfigure}
\caption{Schematic representations of the graph partitions $A$, $B_1$, and $B_2$.
(a) Three-block setting: the dotted lines indicate the cuts producing $B_1=\{g_1,\ g_2\cup g_3\}$ and $B_2=\{g_3,\ g_1\cup g_2\}$, with the $g_1$--$g_2$ connection drawn thicker than the $g_2$--$g_3$ connection, reflecting $q_{12}>q_{23}$, under which the graph-informed distances rank $B_1$ farther from $A$ than $B_2$.
(b) Two-block setting: the dotted lines indicate the cuts producing $A=\{g_1\cup g_2,\ g_3\cup g_4\}$; $B_1=\{g_1,\ g_2,\ g_3\cup g_4\}$; $B_2=\{g_1\cup g_2,\ g_3,\ g_4\}$. The within-block connectivity on $g_1\cup g_2$ is drawn thicker than that on $g_3\cup g_4$, corresponding to $p_1>p_2$.}
\label{fig:sbm}
\end{figure}

\subsection{Intra-community split awareness}

Let $G=(V,E)$ follow a two-block stochastic block model with meta communities 
$L := g_1 \cup g_2$ and $R := g_3 \cup g_4$, 

where $|g_1|=|g_3|=n_1$, $|g_2|=|g_4|=n_2$, $N=2n_1+2n_2$, and $r=n_1/n_2\in(0,\infty)$ fixed as $n_1,n_2\to\infty$. Conditional on the block memberships, edges are independent with probability $p_1$ for pairs inside $g_1\cup g_2$, probability $p_2$ for pairs inside $g_3\cup g_4$, and probability $q$ for pairs with one endpoint in $g_1\cup g_2$ and the other in $g_3\cup g_4$, where $0<p_1,p_2,q<1$. 

We compare $A=\{g_1\cup g_2,\ g_3\cup g_4\}$, $B_1=\{g_1,\ g_2,\ g_3\cup g_4\}$, and $B_2=\{g_1\cup g_2,\ g_3,\ g_4\}$ to study how graph-informed distances order different intra-community refinements.

For compactness, let $K(r)=r+r^{-1}+2$, $S_1(r)=1+\{r+r^{-1}\}/2$, $M_1(r)=\max\{r/2,\ 1/(2r),\ 1\}$, $\Phi(r)=2+\{r+r^{-1}\}/2-M_1(r)$, and $D(r)=(p_1+p_2)S_1(r)+qK(r)$. Also let $u_0(r)=S_1(r)^{-1}$, $u_1(r)=r\{2S_1(r)\}^{-1}$, $u_2(r)=\{2rS_1(r)\}^{-1}$, let $h_2(x)=-x\log x-(1-x)\log(1-x)$, let $h_3(r)=-\sum_{j=0}^2u_j(r)\log u_j(r)$, and define $F_r(p)=\{p+qK(r)\}h_2[p/\{p+qK(r)\}]+pS_1(r)h_3(r)$. All almost sure statements are with respect to a sequence of graphs generated under the above model as $N\to\infty$.

\begin{theorem}[Intra-community split awareness]\label{thm:intra}
Under the two-block model above, the following almost sure limits hold.

\noindent(i) If $p_1>p_2$, then
$
N^{-2}\{\dMK(A,B_1)-\dMK(A,B_2)\}\to \{r/4(r+1)^2\}(p_1-p_2)>0
$
almost surely.

\noindent(ii) If $p_1>p_2$, then
$
\dVI(A,B_1)-\dVI(A,B_2)\to \{F_r(p_1)-F_r(p_2)\}/D(r)>0
$
almost surely.

\noindent(iii) If $p_1>p_2$ and $p_2>qK(r)$, then
$
\dVD(A,B_1)-\dVD(A,B_2)\to \{(p_1-p_2)(\Phi(r)-1)\}/D(r)>0
$
almost surely.
\end{theorem}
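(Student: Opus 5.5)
The plan is to reduce everything to edge counts of the realized graph, and then pass to almost sure limits of normalized counts. For both $B_1$ and $B_2$, the base partition $A=\{L,R\}$ is fixed, and $B_1,B_2\in\mathcal R(A)$ are one-step refinements splitting $C(B_1)=L$ and $C(B_2)=R$ respectively. The relevant counts are $e_{11},e_{22},e_{12}$ for the split of $L$ (into $g_1,g_2$) and of $R$ (into $g_3,g_4$), together with $\eta_A=|E_0(A)|+1$ and $m=|E|$. Each of these is a sum of independent Bernoulli variables. For example, $e_{12}(B_1)\sim\mathrm{Bin}(n_1n_2,p_1)$, $e_{11}(B_1)\sim\mathrm{Bin}\{\binom{n_1}{2},p_1\}$, and $|E_0(A)|\sim\mathrm{Bin}\{(n_1+n_2)^2,q\}$.

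\textbf{Step 1: almost sure limits of the counts.} By Hoeffding's inequality, each centered count divided by $N^2$ has tail probability at most $2\exp(-cN^2\epsilon^2)$. These bounds are summable in $N$, so Borel--Cantelli gives almost sure convergence of every normalized count to its mean. With $n_1+n_2=N/2$ and $n_1/n_2=r$, we have $n_1n_2/(n_1+n_2)^2=1/K(r)$. Hence, after dividing by $n_1n_2$, the limits are
\begin{itemize}
\item $e_{12}\to p$, $e_{11}\to pr/2$ and $e_{22}\to p/(2r)$, with $p=p_1$ for the split of $L$ and $p=p_2$ for the split of $R$;
\item $\mu\to pS_1(r)$;
\item $\eta_A\to qK(r)$;
\item $m+1\to(p_1+p_2)S_1(r)+qK(r)=D(r)$.
\end{itemize}
Consequently $\mu\,\alpha\to pS_1(r)\,(u_1,u_2,u_0)$.

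\textbf{Step 2: exact finite-sample representations.} Next I would compute $d(A,B)$ exactly for a one-step refinement, which is the computation behind Theorem~\ref{thm:localscore}. The marked partitions differ only in how the block $E(C)$ is treated: in $\map(B)$ it is split into the pieces with $e_{11}$ and $e_{22}$ edges, and the piece with $e_{12}$ edges is merged into the marked cut block. Reading off the contingency table between $\map(A)$ and $\map(B)$ yields three identities.
\begin{itemize}
\item For the graph-informed Mirkin distance, $\dMK(A,B)=e_{12}(B)$, since $E_0(B)=E_0(A)\cup E_{12}$.
\item For the graph-informed variation of information, $\dVI(A,B)=F_A^{\mathrm{VI}}(B)/(m+1)$. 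The term $\mu\HH(\alpha)$ arises from conditioning on the block $E(C)$, and the term $(\eta_A+e_{12})h_2\{\cdot\}$ from conditioning on the enlarged marked block.
\item For the graph-informed van Dongen distance, $\dVD(A,B)=F_A^{\mathrm{VD}}(B)/(m+1)$. This follows because all unaffected blocks contribute their full size twice.
\end{itemize}
Since $A$ and $m$ are common to $B_1$ and $B_2$, each difference is $\{F_A(B_1)-F_A(B_2)\}/(m+1)$.

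\textbf{Step 3: continuous mapping and positivity.} The functions involved are continuous at the limits, because all limiting proportions are strictly positive, so I can apply the continuous mapping theorem to the counts from Step 1.
\begin{itemize}
\item \emph{Mirkin.} $N^{-2}(e_{12}(B_1)-e_{12}(B_2))\to(p_1-p_2)\,n_1n_2/N^2=r(p_1-p_2)/\{4(r+1)^2\}$.
\item \emph{Variation of information.} The score is homogeneous of degree one in the counts, so dividing numerator and denominator by $n_1n_2$ gives the limit $\{F_r(p_1)-F_r(p_2)\}/D(r)$. For positivity, $\tfrac{d}{dp}\{(p+c)h_2(p/(p+c))\}=\log\{(p+c)/p\}>0$ and $S_1h_3>0$. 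Hence $F_r$ is strictly increasing, and $p_1>p_2$ makes the limit positive.
\item \emph{van Dongen.} The maxima must first be resolved. The condition $p_2>qK(r)$, together with $p_1>p_2$, gives $e_{12}>\eta_A$ eventually almost surely for both splits, so $\max\{\eta_A,e_{12}\}=e_{12}$. Then $F^{\mathrm{VD}}=\eta_A+\mu-\max\{e_{11},e_{22},e_{12}\}$, whose normalized limit is $qK+p\{S_1-M_1\}$, and $S_1-M_1=\Phi-1$. The $\eta_A$ terms cancel in the difference. Finally $\Phi(r)-1>0$, since $S_1-r/2=1+1/(2r)$, $S_1-1/(2r)=1+r/2$ and $S_1-1=(r+r^{-1})/2$ are all positive.
\end{itemize}

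The main obstacle is the van Dongen maxima. The maximum is not continuous where two of its arguments tie, for instance at $r=1$ or when $r/2=1$. This is harmless here because $\max$ is a continuous function of its arguments and the limit is taken of its arguments. However, the case selection $\max\{\eta_A,e_{12}\}=e_{12}$ requires the strict inequality $p_2>qK(r)$ to hold eventually almost surely, and I would make that step explicit.
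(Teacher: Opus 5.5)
Your proposal is correct and follows essentially the paper's argument: the identities $\dVI(A,B)=F_A^{\mathrm{VI}}(B)/(m+1)$ and $\dVD(A,B)=F_A^{\mathrm{VD}}(B)/(m+1)$ repackage the paper's direct computation of $\HH(P_1\mid P_A)+\HH(P_A\mid P_1)$ and of the two max-sums, and your positivity arguments (the derivative $\log\{(p+c)/p\}+S_1(r)h_3(r)>0$ and the casework giving $\Phi(r)>1$) are the paper's. The remaining differences are minor. You use Hoeffding plus Borel--Cantelli instead of the paper's appeal to the strong law, which is if anything the cleaner justification for a triangular array of graphs. You also fix $\max\{\eta_A,e_{12}\}=e_{12}$ for all large $N$ rather than in the limit; that step is unnecessary, since $\max$ is continuous everywhere, including at ties.
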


In the two-block SBM setting, we are essentially comparing two
size-matched refinements of an underlying two-community
graph on the meta-communities $L = g_1 \cup g_2$ and $R = g_3 \cup
g_4$: the refinement $B_1$ refines $L$, and $B_2$ refines $R$. When
$p_1 > p_2$, the meta-community $L$ is denser than $R$, so $B_1$ is
the refinement that cuts the denser of the two meta-communities while
$B_2$ cuts the sparser one. 
Theorem~\ref{thm:intra} states that all
three graph-informed distances are asymptotically and almost surely
strictly larger from $A$ to $B_1$ than from $A$ to $B_2$ under the conditions of Theorem~\ref{thm:intra}: a
size-matched refinement of a denser community is more topologically
disruptive than the same refinement of a sparser community, and the
graph-informed distances correctly reflect this. Any distance
depending only on vertex-level block sizes or label assignments, by
contrast, treats $B_1$ and $B_2$ identically in this size-matched regime: at the vertex level, the two refinements have the same
form relative to $A$.

Theorem \ref{thm:intra} also shows that the difference in the Mirkin distance depends only on the extra cut mass produced by splitting $g_1\cup g_2$ rather than $g_3\cup g_4$. The variation of information limit depends on a one-dimensional contrast through the strictly increasing function $F_r$, so the ordering is governed directly by the inequality $p_1>p_2$. The van Dongen conclusion again depends on the dominant induced edge class; the additional condition $p_2>qK(r)$ ensures that the within-side classes dominate the cross-side class on both sides, after which the sign is controlled by $(p_1-p_2)$. 

\begin{corollary}[Equal-sized intra-community case]\label{cor:intra-eq}
Assume the setting of Theorem~\ref{thm:intra} with $n_1=n_2$, so that $r=1$, $K(1)=4$, $\Phi(1)=2$, and $D(1)=2(p_1+p_2+2q)$. If $p_1>p_2$, then\\
$N^{-2}\{\dMK(A,B_1)-\dMK(A,B_2)\}\to (p_1-p_2)/16>0$ almost surely,\\
$\dVI(A,B_1)-\dVI(A,B_2)\to \{F_1(p_1)-F_1(p_2)\}/D(1)>0$ almost surely, where $F_1(p)=(p+4q)\,h_2\{p(p+4q)^{-1}\}+3p\log 2$,\\
and, if in addition $p_2>4q$, $\dVD(A,B_1)-\dVD(A,B_2)\to (p_1-p_2)/D(1)>0$ almost surely.
\end{corollary}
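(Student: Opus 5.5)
This corollary follows by specializing Theorem~\ref{thm:intra} to $r=1$. The hypotheses of Theorem~\ref{thm:intra} hold with $K(1)=4$, and the three conclusions carry over once the limits are evaluated. The only work is checking the constants at $r=1$.

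First I compute the functions of $r$ at $r=1$:
\[
K(1)=4,\qquad S_1(1)=2,\qquad M_1(1)=\max\{1/2,1/2,1\}=1,\qquad \Phi(1)=2+1-1=2.
\]
Hence $D(1)=2(p_1+p_2)+4q=2(p_1+p_2+2q)$. For the weights, $u_0(1)=1/2$ and $u_1(1)=u_2(1)=1/4$. This gives
\[
h_3(1)=\tfrac12\log 2+2\cdot\tfrac14\log 4=\tfrac32\log 2,
\]
and therefore $pS_1(1)h_3(1)=3p\log 2$. It follows that $F_r(p)$ at $r=1$ equals the stated $F_1(p)=(p+4q)h_2\{p/(p+4q)\}+3p\log 2$.

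Next I substitute into the three parts of Theorem~\ref{thm:intra}:
\begin{itemize}
\item[(i)] With $r=1$ the Mirkin constant is $r/\{4(r+1)^2\}=1/16$, so the limit is $(p_1-p_2)/16$. This is positive because $p_1>p_2$.
\item[(ii)] The variation of information limit becomes $\{F_1(p_1)-F_1(p_2)\}/D(1)$. Its positivity is already asserted by Theorem~\ref{thm:intra}(ii) under $p_1>p_2$, since that theorem states $F_r$ is strictly increasing.
\item[(iii)] The extra hypothesis $p_2>qK(r)$ becomes $p_2>4q$. Since $\Phi(1)-1=1$, the van Dongen limit reduces to $(p_1-p_2)/D(1)>0$.
\end{itemize}

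There is no real obstacle here. The only point needing care is the arithmetic for $h_3(1)$ and for $M_1(1)$; I would double-check that the maximum is attained by the constant $1$. If the positivity in (ii) needed independent confirmation, I would differentiate $F_1$. The entropy term $(p+4q)h_2\{p/(p+4q)\}$ equals $p\log\{(p+4q)/p\}+4q\log\{(p+4q)/(4q)\}$, whose derivative in $p$ is $\log\{(p+4q)/p\}>0$. The term $3\log 2$ also adds a positive derivative, so $F_1$ is strictly increasing.
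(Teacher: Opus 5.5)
Your proposal is correct and follows the paper's route: the corollary is a direct specialization of Theorem~\ref{thm:intra} to $r=1$, and your evaluations $K(1)=4$, $S_1(1)=2$, $M_1(1)=1$, $\Phi(1)=2$, $D(1)=2(p_1+p_2+2q)$, $u_0(1)=1/2$, $u_1(1)=u_2(1)=1/4$, $h_3(1)=\tfrac32\log 2$ (so $pS_1(1)h_3(1)=3p\log 2$), and the Mirkin constant $1/16$ are all right. Your optional derivative check $\frac{d}{dp}\bigl[(p+4q)h_2\{p/(p+4q)\}\bigr]=\log\{(p+4q)/p\}$ is the same monotonicity argument the paper uses for $F_r$ in the proof of Theorem~\ref{thm:intra}(ii).
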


\begin{remark}
Corollary~\ref{cor:intra-eq} gives the corresponding interpretation for within-community splitting when each meta-community is split into two equal-sized pieces. When $n_1=n_2$, the two refinements $B_1=\{g_1,\ g_2,\ g_3\cup g_4\}$ and $B_2=\{g_1\cup g_2,\ g_3,\ g_4\}$ have the same vertex-level form relative to $A=\{g_1\cup g_2,\ g_3\cup g_4\}$: each splits one of two equal-sized meta-communities into two equal parts. A graph-unaware distance therefore treats them identically. The graph-informed distances, however, distinguish them through edge density, and correctly assign a larger distance to $B_1$ when the side $g_1\cup g_2$ is denser than $g_3\cup g_4$, that is, when $p_1>p_2$. Fig.~\ref{fig:intra} gives the analogous picture for intra-community splitting.
\end{remark}

\section{Simulation study}

We illustrate the finite-sample behavior of the three graph-informed distances in the equal-sized settings of Corollaries~\ref{cor:inter-eq} and \ref{cor:intra-eq}. In the three-block case, we set $p_1=p_2=p_3=0.8$, $q_{13}=0.05$, and $q_{23}=0.10$, then fix $q_{12}=0.30$ and vary the graph size $N$, and also fix $N=600$ and vary the signal gap $q_{12}-q_{23}$. In the two-block case, we set $p_1=0.8$, $p_2=0.4$, and $q=0.05$ and vary $N$, and also fix $N=800$, $p_2=0.3$, $q=0.05$ and vary the signal gap $p_1-p_2$. For each parameter setting, we generated 50 independent graphs and computed the Monte Carlo mean of $d_G(A,B_1)-d_G(A,B_2)$ for the graph-informed Mirkin distance, normalized by $N^2$, together with the graph-informed variation of information and van Dongen distances.

\begin{figure}
\centering
\includegraphics[width=0.75\textwidth]{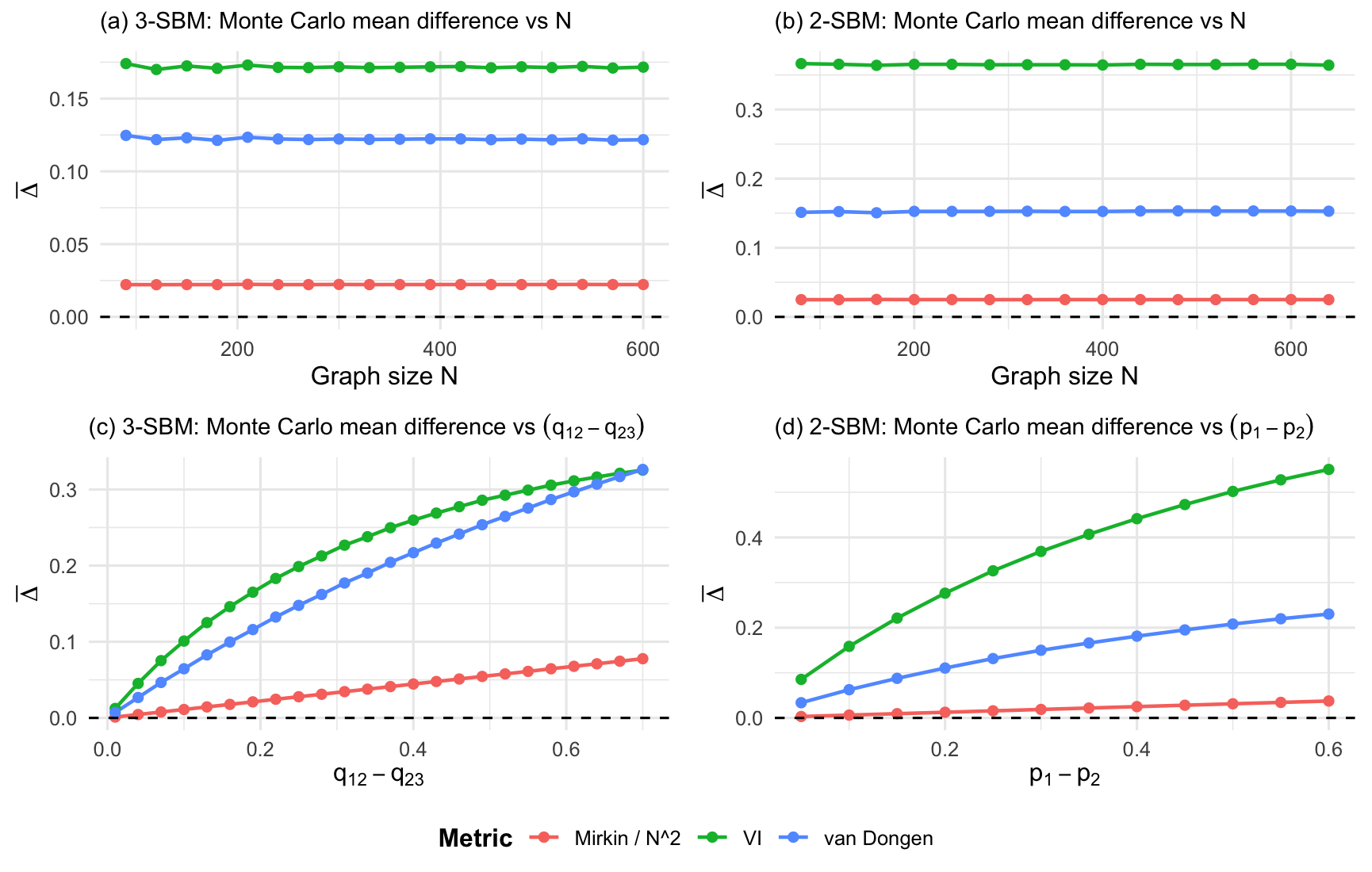}
\caption{Monte Carlo mean differences $d_G(A,B_1)-d_G(A,B_2)$ for the graph-informed Mirkin, variation of information, and van Dongen distances. Panels (a) and (b) vary graph size in the equal-sized three-block and two-block settings. Panels (c) and (d) fix graph size and vary $q_{12}-q_{23}$ and $p_1-p_2$. Positive values indicate that $B_1$ is farther from $A$ than $B_2$.}
\label{fig:comparison}
\end{figure}

Fig.~\ref{fig:comparison} supports the theory. In panels (a) and (b), the Monte Carlo means are positive and stabilize quickly as the graph size increases, consistent with the asymptotic ordering $d_G(A,B_1)>d_G(A,B_2)$ established in the Corollaries. In panels (c) and (d), the gap, $d_G(A,B_1)-d_G(A,B_2)$, widens monotonically as the relevant structural contrast, $q_{12}-q_{23}$ (inter-community) and $p_1-p_2$ (intra-community), strengthens, confirming that the distances respond smoothly to the magnitude of the disruption rather than merely to its presence. The Mirkin curve is consistently the smallest, since it registers only the change in cut mass, whereas the variation of information and van Dongen distances react more strongly because they incorporate the full induced edge-partition structure. Across all four panels the ordering is the same: $B_1$ is systematically farther from $A$ than $B_2$ whenever the graph structure makes the split underlying $B_1$ more disruptive, illustrating that all three graph-informed distances recover the topology-aware ranking that graph-agnostic distances cannot.

\section{Conclusion}
Graph-informed distances compare partitions through the edge structure they induce and thereby distinguish refinements that are topologically different but combinatorially similar at the vertex level. The metric construction in Section~\ref{sec:construction} gives a general route from distances between set partitions to distances between contiguous graph partitions, while the local refinement criterion characterizes how graph-aware distance reflects graph structures through the edge mass and split profile created by a refinement. The stochastic block model results show that graph-informed Mirkin, variation of information, and van Dongen distances all respond correctly to intuitively stronger inter-community and intra-community disruptions, though through different local mechanisms. An important direction for future work is to extend the present framework to weighted, directed, and dynamic graphs, where partition comparisons must account not only for connectivity but also for edge heterogeneity and temporal evolution.

\bibliographystyle{plainnat}
\bibliography{paper-ref}

\newpage
\appendix
\section{Proof of Theorem \ref{thm:metric}}
\begin{proof}
Fix $P,Q,R\in\Pi_G(V)$. Since $\map(P),\map(Q),\map(R)\in\Pi(E^\star)$ and $\phi$ is a distance metric on $\Pi(E^\star)$, the quantity
$
\dG(P,Q;\phi)=\phi\{\map(P),\map(Q)\}
$
is well-defined. Non-negativity, symmetry and the triangle inequality follow directly from the corresponding properties of $\phi$. Indeed,
\[
\dG(P,Q;\phi)=\phi\{\map(P),\map(Q)\}\geq0,
\]
\[
\dG(P,Q;\phi)=\phi\{\map(P),\map(Q)\}
=\phi\{\map(Q),\map(P)\}=\dG(Q,P;\phi),
\]
and
\[
\dG(P,R;\phi)=\phi\{\map(P),\map(R)\}
\leq \phi\{\map(P),\map(Q)\}+\phi\{\map(Q),\map(R)\}
=\dG(P,Q;\phi)+\dG(Q,R;\phi).
\]

It remains to prove the identity of indiscernibles on $\Pi_G(V)$. If $P=Q$, then $\map(P)=\map(Q)$ and hence $\dG(P,Q;\phi)=0$. Conversely, suppose $\dG(P,Q;\phi)=0$. Since $\phi$ is a metric on $\Pi(E^\star)$, this implies
\[
\map(P)=\map(Q)
\]
as partitions of $E^\star$. In $\map(P)$, the unique block containing $\star$ is $E_0(P)\cup\{\star\}$, while in $\map(Q)$ the unique block containing $\star$ is $E_0(Q)\cup\{\star\}$. Therefore equality of the marked edge partitions gives
\[
E_0(P)\cup\{\star\}=E_0(Q)\cup\{\star\},
\]
and since $\star\notin E$, it follows that
\[
E_0(P)=E_0(Q).
\]

We now show that this equality implies $P=Q$. Let
\[
H_P=(V,E\setminus E_0(P)).
\]
For any block $P_i$ of $P$, all edges of $G[P_i]$ belong to $E\setminus E_0(P)$, and since $P\in\Pi_G(V)$, the subgraph $G[P_i]$ is connected. Hence all vertices in $P_i$ lie in a single connected component of $H_P$. Conversely, no edge of $E\setminus E_0(P)$ has endpoints in two different blocks of $P$, because every such edge would belong to $E_0(P)$. Thus no connected component of $H_P$ can contain vertices from two different blocks of $P$. Therefore the connected components of $H_P$ are exactly the blocks of $P$.

The same argument applied to $Q$ shows that the connected components of
\[
H_Q=(V,E\setminus E_0(Q))
\]
are exactly the blocks of $Q$. Since $E_0(P)=E_0(Q)$, we have $H_P=H_Q$, so their connected components are identical. Hence $P=Q$. Therefore $\dG(P,Q;\phi)=0$ implies $P=Q$ on $\Pi_G(V)$, and $\dG(\cdot,\cdot;\phi)$ is a distance metric on $\Pi_G(V)$.
\end{proof}

\section{Proof of Theorem \ref{thm:cutmetric}}
\begin{proof}
Fix $P,Q,R\in\Pi_G(V)$. Since $\cutmap(P)=E_0(P)$, $\cutmap(Q)=E_0(Q)$ and $\cutmap(R)=E_0(R)$ are subsets of $E$, and since $\rho$ is a distance metric on $\mathcal{P}(E)$, the quantity
$
\dG(P,Q;\rho)=\rho\{\cutmap(P),\cutmap(Q)\}
$
is well-defined. Non-negativity, symmetry and the triangle inequality follow directly from the corresponding properties of $\rho$:
\[
\dG(P,Q;\rho)=\rho\{\cutmap(P),\cutmap(Q)\}\geq0,
\]
\[
\dG(P,Q;\rho)=\rho\{\cutmap(P),\cutmap(Q)\}
=\rho\{\cutmap(Q),\cutmap(P)\}=\dG(Q,P;\rho),
\]
and
\[
\dG(P,R;\rho)=\rho\{\cutmap(P),\cutmap(R)\}
\leq \rho\{\cutmap(P),\cutmap(Q)\}+\rho\{\cutmap(Q),\cutmap(R)\}
=\dG(P,Q;\rho)+\dG(Q,R;\rho).
\]

It remains to prove the identity of indiscernibles on $\Pi_G(V)$. If $P=Q$, then $\cutmap(P)=\cutmap(Q)$ and hence $\dG(P,Q;\rho)=0$. Conversely, suppose $\dG(P,Q;\rho)=0$. Since $\rho$ is a metric on $\mathcal{P}(E)$, this implies
\[
\cutmap(P)=\cutmap(Q),
\]
or equivalently,
\[
E_0(P)=E_0(Q).
\]

As in the proof of Theorem~\ref{thm:metric}, the connected components of
\[
(V,E\setminus E_0(P))
\]
are exactly the blocks of $P$, because $P\in\Pi_G(V)$ and all edges between distinct blocks of $P$ are removed. Similarly, the connected components of
\[
(V,E\setminus E_0(Q))
\]
are exactly the blocks of $Q$. Since $E_0(P)=E_0(Q)$, these two graphs are identical and hence have the same connected components. Therefore $P=Q$.

Thus $\dG(P,Q;\rho)=0$ implies $P=Q$ for $P,Q\in\Pi_G(V)$, and $\dG(\cdot,\cdot;\rho)$ is a distance metric on $\Pi_G(V)$.
\end{proof}

\section{Proof of Theorem \ref{thm:localscore}}
\begin{proof}
Fix the graph $G=(V,E)$ and a base partition $A\in\Pi_G(V)$. For $B\in\mathcal R(A)$, let $C(B)\in A$ be the unique block of $A$ split by $B$, and write $C(B)=C_1(B)\cup C_2(B)$ with $C_1(B)$ and $C_2(B)$ non-empty and disjoint. Let
\[
E_{11}(B)=\{\{u,v\}\in E(C(B)):u,v\in C_1(B)\},
\]
\[
E_{22}(B)=\{\{u,v\}\in E(C(B)):u,v\in C_2(B)\},
\]
and
\[
E_{12}(B)=\{\{u,v\}\in E(C(B)):u\in C_1(B),\ v\in C_2(B)\}.
\]
These three sets are pairwise disjoint and their union is $E(C(B))$. Their cardinalities are $e_{11}(B)$, $e_{22}(B)$ and $e_{12}(B)$, respectively, and $\mu(B)=|E(C(B))|$.  

It is enough to prove that, whenever $B,B'\in\mathcal R(A)$ have the same local edge statistics, $(\mu, \alpha)$, up to relabelling of the two child blocks, the corresponding distances from $A$ are equal. After possibly interchanging the labels of $C_1(B')$ and $C_2(B')$, we may assume
\[
\mu(B)=\mu(B'),\qquad e_{11}(B)=e_{11}(B'),\qquad e_{22}(B)=e_{22}(B'),\qquad e_{12}(B)=e_{12}(B').
\]
Indeed, equality of $\mu$ and $\alpha$ gives equality of the three edge counts.

We first consider the marked edge-partition distance $\dG(\cdot,\cdot;\phi)$. Let
$
P_A=\map(A),\ P_B=\map(B),\ P_{B'}=\map(B')
$
be the marked edge partitions in $\Pi(E^\star)$. Since $B$ is obtained from $A$ by splitting $C(B)$ into $C_1(B)$ and $C_2(B)$, the marked between-cluster edge class of $P_B$ is
\[
E_0(B)\cup\{\star\}=E_0(A)\cup E_{12}(B)\cup\{\star\},
\]
the possible non-empty within-block edge classes $E_{11}(B)$ and $E_{22}(B)$ replace the class $E(C(B))$ in the within-cluster edge class of $P_B$, and every other within-block class of $P_A$ is unchanged. The same description holds for $B'$.

By permutation invariance of $\phi$, it is enough to construct a permutation $\sigma$ of $E^\star$ such that
\[
\sigma(P_A)=P_A,\qquad \sigma(P_B)=P_{B'}.
\]
We construct $\sigma$ in two cases.

\medskip
\noindent\textit{Case 1: $C(B)=C(B')$.}
Since the corresponding local edge counts are equal, there exist bijections
\[
f_{11}:E_{11}(B)\to E_{11}(B'),\qquad
f_{22}:E_{22}(B)\to E_{22}(B'),\qquad
f_{12}:E_{12}(B)\to E_{12}(B').
\]
Define $\sigma:E^\star\to E^\star$ by
\[
\sigma(e)=
\begin{cases}
f_{11}(e), & e\in E_{11}(B),\\
f_{22}(e), & e\in E_{22}(B),\\
f_{12}(e), & e\in E_{12}(B),\\
e, & e\in E^\star\setminus E(C(B)).
\end{cases}
\]
Empty sets cause no difficulty. Since $E_{11}(B)$, $E_{22}(B)$ and $E_{12}(B)$ partition $E(C(B))$, and the three maps above are bijections onto the corresponding classes for $B'$, $\sigma$ is a permutation of $E^\star$. The block $E(C(B))$ of $P_A$ is mapped onto itself, while every other block of $P_A$, including $E_0(A)\cup\{\star\}$, is fixed as a set. Hence $\sigma(P_A)=P_A$. Moreover, the blocks $E_{11}(B)$ and $E_{22}(B)$ are mapped onto $E_{11}(B')$ and $E_{22}(B')$, respectively, and
\[
\sigma\{E_0(A)\cup E_{12}(B)\cup\{\star\}\}
=
E_0(A)\cup E_{12}(B')\cup\{\star\}.
\]
Therefore $\sigma(P_B)=P_{B'}$.

\medskip
\noindent\textit{Case 2: $C(B)\neq C(B')$.}
The edge sets $E(C(B))$ and $E(C(B'))$ are disjoint, since $C(B)$ and $C(B')$ are distinct blocks of $A$. By the equality of local counts, there exist bijections
\[
f_{11}:E_{11}(B)\to E_{11}(B'),\qquad
f_{22}:E_{22}(B)\to E_{22}(B'),\qquad
f_{12}:E_{12}(B)\to E_{12}(B').
\]
Together these define a bijection $f:E(C(B))\to E(C(B'))$ that sends each local edge class of $B$ to the corresponding local edge class of $B'$. Let $f^{-1}:E(C(B'))\to E(C(B))$ be its inverse. Define $\sigma:E^\star\to E^\star$ by
\[
\sigma(e)=
\begin{cases}
f(e), & e\in E(C(B)),\\
f^{-1}(e), & e\in E(C(B')),\\
e, & e\in E^\star\setminus\{E(C(B))\cup E(C(B'))\}.
\end{cases}
\]
This is a permutation of $E^\star$. In $P_A$, the sets $E(C(B))$ and $E(C(B'))$ are two within-block edge classes; $\sigma$ swaps these two blocks and fixes every other block, including $E_0(A)\cup\{\star\}$. Hence $\sigma(P_A)=P_A$. In $P_B$, the block $E(C(B))$ has been replaced by the possible non-empty classes $E_{11}(B)$ and $E_{22}(B)$, while $E_{12}(B)$ has been added to the marked cut block. These are mapped to the corresponding classes for $B'$. Also, the unsplit class $E(C(B'))$ in $P_B$ is mapped to $E(C(B))$, which is the unsplit class in $P_{B'}$. Finally,
\[
\sigma\{E_0(A)\cup E_{12}(B)\cup\{\star\}\}
=
E_0(A)\cup E_{12}(B')\cup\{\star\}.
\]
Thus $\sigma(P_B)=P_{B'}$.

In both cases, permutation invariance of $\phi$ gives
\[
\dG(A,B;\phi)
=
\phi(P_A,P_B)
=
\phi\{\sigma(P_A),\sigma(P_B)\}
=
\phi(P_A,P_{B'})
=
\dG(A,B';\phi).
\]

We next consider the cut-based distance $\dG(\cdot,\cdot;\rho)$. Since $B$ is a one-step refinement of $A$,
\[
\cutmap(B)=E_0(B)=E_0(A)\cup E_{12}(B),
\]
and similarly $\cutmap(B')=E_0(A)\cup E_{12}(B')$. Since $e_{12}(B)=e_{12}(B')$, the two sets $E_{12}(B)$ and $E_{12}(B')$ have the same cardinality. Hence there exists a permutation $\tau$ of $E$ such that
\[
\tau\{E_0(A)\}=E_0(A),\qquad
\tau\{E_{12}(B)\}=E_{12}(B').
\]
For example, choose any bijection from $E_{12}(B)$ to $E_{12}(B')$, extend it to a permutation of $E\setminus E_0(A)$, and fix $E_0(A)$ pointwise. Then
\[
\tau\{\cutmap(A)\}=\cutmap(A),\qquad
\tau\{\cutmap(B)\}=\cutmap(B').
\]
By permutation invariance of $\rho$,
\[
\dG(A,B;\rho)
=
\rho\{\cutmap(A),\cutmap(B)\}
=
\rho\{\tau(\cutmap(A)),\tau(\cutmap(B))\}
=
\rho\{\cutmap(A),\cutmap(B')\}
=
\dG(A,B';\rho).
\]

Thus, for either construction, $d(A,B)$ is constant over all refinements $B\in\mathcal R(A)$ with the same local statistics $\{\mu(B),\alpha(B)\}$ up to relabelling of the two child blocks. Let $\mathcal S_A$ be the set of attainable local-statistic pairs $(\mu,\alpha)$, with the first two coordinates of $\alpha$ identified under relabelling. Define
\[
f_A:\mathcal S_A\to\mathbb{R},\qquad
f_A(\mu,\alpha)=d(A,B),
\]
where $B$ is any element of $\mathcal R(A)$ with local statistics $(\mu,\alpha)$. The preceding argument shows that $f_A$ is well-defined. Now set
\[
F_A(B)=f_A\{\mu(B),\alpha(B)\},\qquad B\in\mathcal R(A).
\]
Then $F_A(B)$ depends on $B$ only through $\mu(B)$ and $\alpha(B)$, up to relabelling of the two child blocks, and
\[
F_A(B)=d(A,B)
\]
for every $B\in\mathcal R(A)$. Therefore, for any $B_1,B_2\in\mathcal R(A)$,
\[
F_A(B_1)\le F_A(B_2)
\iff
d(A,B_1)\le d(A,B_2).
\]
Hence $d$ satisfies the local graph-aware refinement preference criterion on $\mathcal R(A)$.
\end{proof}

\section{Proof of Theorem \ref{thm:inter}}
\begin{proof}
For $1\leq i\leq 3$, let $\mathsf W_i$ denote the number of observed edges with both endpoints in $g_i$, and for $1\leq i<j\leq 3$, let $\mathsf C_{ij}$ denote the number of observed edges between $g_i$ and $g_j$. Write
\[
\mathsf M=|E|=\mathsf W_1+\mathsf W_2+\mathsf W_3+\mathsf C_{12}+\mathsf C_{23}+\mathsf C_{13}.
\]
Since the edge indicators are independent and the block memberships are deterministic, the strong law of large numbers gives
\[
N^{-2}\mathsf W_i \to \lambda_i^2p_i/2 \quad \text{a.s.}\qquad (i=1,2,3),
\]
and
\[
N^{-2}\mathsf C_{ij}\to \lambda_i\lambda_jq_{ij}\quad \text{a.s.}\qquad (1\leq i<j\leq 3).
\]
Hence, on an event of probability one that we fix throughout the proof,
\[
N^{-2}\mathsf M\to T=W_1+\lambda_2^2p_2/2+W_3+\lambda_1\lambda_2q_{12}+\lambda_2\lambda_3q_{23}+\lambda_1\lambda_3q_{13}>0.
\]
We prove the three assertions in turn.

\medskip
\noindent\textit{Mirkin distance.}
Since $A=\{V\}$, $E_0(A)=\varnothing$. Thus
\[
\dMK(A,B)=|E_0(B)|
\]
for every partition $B$. Under $B_1=\{g_1,\ g_2\cup g_3\}$, the cut edges are exactly those between $g_1$ and $g_2\cup g_3$, so
\[
\dMK(A,B_1)=\mathsf C_{12}+\mathsf C_{13}.
\]
Under $B_2=\{g_3,\ g_1\cup g_2\}$, the cut edges are exactly those between $g_3$ and $g_1\cup g_2$, so
\[
\dMK(A,B_2)=\mathsf C_{13}+\mathsf C_{23}.
\]
Therefore
\[
\dMK(A,B_1)-\dMK(A,B_2)=\mathsf C_{12}-\mathsf C_{23}.
\]
Dividing by $N^2$ and passing to the almost sure limit gives
\[
N^{-2}\{\dMK(A,B_1)-\dMK(A,B_2)\}
\to
\lambda_1\lambda_2q_{12}-\lambda_2\lambda_3q_{23}
=
\delta_{\kappa}^{\mathrm{inter}}.
\]
This limit is strictly positive when $\lambda_1q_{12}>\lambda_3q_{23}$.

\medskip
\noindent\textit{Graph-informed variation of information.}
Under the marked construction, $\map(A)=\{\{\star\},E\}$. For $B_1$, write
\[
\mathsf a_{1,N}=\mathsf W_1,\qquad
\mathsf b_{1,N}=\mathsf W_2+\mathsf W_3+\mathsf C_{23},\qquad
\mathsf c_{1,N}=\mathsf C_{12}+\mathsf C_{13}.
\]
Thus the three blocks of $\map(B_1)$ have sizes $\mathsf a_{1,N}$, $\mathsf b_{1,N}$ and $\mathsf c_{1,N}+1$, where the $+1$ is due to the marker $\star$ in the cut block. For $B_2$, write
\[
\mathsf a_{2,N}=\mathsf W_3,\qquad
\mathsf b_{2,N}=\mathsf W_1+\mathsf W_2+\mathsf C_{12},\qquad
\mathsf c_{2,N}=\mathsf C_{23}+\mathsf C_{13}.
\]
The three blocks of $\map(B_2)$ have sizes $\mathsf a_{2,N}$, $\mathsf b_{2,N}$ and $\mathsf c_{2,N}+1$.

Let $h_2(x)=-x\log x-(1-x)\log(1-x)$. Since variation of information may be written as
$
\HH\{\map(B_k)\mid \map(A)\}+\HH\{\map(A)\mid \map(B_k)\},
$
we have, for $k=1,2$,
\[
\dVI(A,B_k)
=
\frac{\mathsf M}{\mathsf M+1}
\HH\left(
\frac{\mathsf a_{k,N}}{\mathsf M},
\frac{\mathsf b_{k,N}}{\mathsf M},
\frac{\mathsf c_{k,N}}{\mathsf M}
\right)
+
\frac{\mathsf c_{k,N}+1}{\mathsf M+1}
h_2\left(\frac{1}{\mathsf c_{k,N}+1}\right).
\]
Indeed, conditional on the real-edge block $E$ of $\map(A)$, the partition $\map(B_k)$ has proportions $\mathsf a_{k,N}/\mathsf M$, $\mathsf b_{k,N}/\mathsf M$ and $\mathsf c_{k,N}/\mathsf M$, while the only block of $\map(B_k)$ not contained in a block of $\map(A)$ is the marked cut block of size $\mathsf c_{k,N}+1$, which contains $\star$ and $\mathsf c_{k,N}$ real edges.

By the almost sure limits above,
\[
N^{-2}(\mathsf a_{1,N},\mathsf b_{1,N},\mathsf c_{1,N})\to (W_1,U_{23},C_{1\cdot})\quad \text{a.s.},
\]
and
\[
N^{-2}(\mathsf a_{2,N},\mathsf b_{2,N},\mathsf c_{2,N})\to (W_3,U_{12},C_{\cdot3})\quad \text{a.s.}.
\]
Moreover, for $k=1,2$,
\[
0\leq
\frac{\mathsf c_{k,N}+1}{\mathsf M+1}
h_2\left(\frac{1}{\mathsf c_{k,N}+1}\right)
\leq
\frac{\log(\mathsf c_{k,N}+1)+1}{\mathsf M+1}
\to 0
\]
almost surely. Hence, by continuity of entropy on the simplex,
\[
\dVI(A,B_1)\to \HH(W_1/T,U_{23}/T,C_{1\cdot}/T)\quad \text{a.s.},
\]
and
\[
\dVI(A,B_2)\to \HH(W_3/T,U_{12}/T,C_{\cdot3}/T)\quad \text{a.s.}.
\]
Therefore
\[
\dVI(A,B_1)-\dVI(A,B_2)
\to
\delta_{\mathrm{VI}}^{\mathrm{inter}}
\quad \text{a.s.}
\]

It remains to show that $\delta_{\mathrm{VI}}^{\mathrm{inter}}>0$ under the stated assumptions. Set
\[
a_1=W_1,\quad b_1=U_{23},\quad c_1=C_{1\cdot},\qquad
a_2=W_3,\quad b_2=U_{12},\quad c_2=C_{\cdot3}.
\]
The assumptions imply $b_1>\max\{a_1,c_1\}$ and $b_2>\max\{a_2,c_2\}$. They also imply
\[
a_1-a_2=W_1-W_3=(\lambda_1^2p_1-\lambda_3^2p_3)/2\geq0
\]
and
\[
c_1-c_2=\lambda_1\lambda_2q_{12}-\lambda_2\lambda_3q_{23}
=
\lambda_2(\lambda_1q_{12}-\lambda_3q_{23})>0.
\]
Let $\Delta_a=a_1-a_2\geq0$ and $\Delta_c=c_1-c_2>0$. Since $a_1+b_1+c_1=a_2+b_2+c_2=T$,
\[
a_2=a_1-\Delta_a,\qquad
b_2=b_1+\Delta_a+\Delta_c,\qquad
c_2=c_1-\Delta_c.
\]
Let $f(x)=x\log x$ on $(0,\infty)$. The function $f$ is strictly convex, and if $v\geq u>0$ and $0<\delta\leq u$, then
\[
f(u-\delta)+f(v+\delta)>f(u)+f(v).
\]
This follows since $g(\delta)=f(u-\delta)+f(v+\delta)$ satisfies
$
g'(\delta)=f'(v+\delta)-f'(u-\delta)=\log(v+\delta)-\log(u-\delta)>0.
$
Applying this first with $u=a_1$, $v=b_1$, and $\delta=\Delta_a$ gives, with equality allowed only if $\Delta_a=0$,
\[
f(a_2)+f(b_1+\Delta_a)\geq f(a_1)+f(b_1).
\]
Applying it next with $u=c_1$, $v=b_1+\Delta_a$, and $\delta=\Delta_c$ gives the strict inequality
\[
f(c_2)+f(b_2)>f(c_1)+f(b_1+\Delta_a),
\]
because $\Delta_c>0$ and $b_1+\Delta_a\geq b_1>c_1$. Adding yields
\[
f(a_2)+f(b_2)+f(c_2)>f(a_1)+f(b_1)+f(c_1).
\]
Since, for positive $a,b,c$ with $a+b+c=T$,
\[
\HH(a/T,b/T,c/T)=-\frac{f(a)+f(b)+f(c)}{T}+\log T,
\]
we obtain
\[
\begin{aligned}
\delta_{\mathrm{VI}}^{\mathrm{inter}}
={}&\HH(a_1/T,b_1/T,c_1/T)-\HH(a_2/T,b_2/T,c_2/T)\\
={}&\frac{f(a_2)+f(b_2)+f(c_2)-f(a_1)-f(b_1)-f(c_1)}{T}>0.
\end{aligned}
\]

\medskip
\noindent\textit{Graph-informed van Dongen distance.}
Again $\map(A)=\{\{\star\},E\}$. Let $Q$ be a marked edge partition with real-edge class sizes $a,b,c$, where the marked cut block has size $c+1$ and the other two blocks have sizes $a$ and $b$. For all sufficiently large $N$ in the present setting, $c\geq1$. Then the van Dongen formula gives
\[
\phi_{\mathrm{VD}}\{\map(A),Q\}
=
2-\frac{1}{\mathsf M+1}
\left[
1+\max\{a,b,c\}+a+b+c
\right]
=
1-\frac{\max\{a,b,c\}}{\mathsf M+1}.
\]
For $B_1$, the real-edge class sizes are
\[
\mathsf W_1,\qquad
\mathsf W_2+\mathsf W_3+\mathsf C_{23},\qquad
\mathsf C_{12}+\mathsf C_{13}.
\]
For $B_2$, the real-edge class sizes are
\[
\mathsf W_3,\qquad
\mathsf W_1+\mathsf W_2+\mathsf C_{12},\qquad
\mathsf C_{23}+\mathsf C_{13}.
\]
The strict inequalities $U_{23}>\max\{C_{1\cdot},W_1\}$ and $U_{12}>\max\{C_{\cdot3},W_3\}$ imply, by the almost sure convergence of normalized edge counts, that almost surely for all sufficiently large $N$,
\[
\max\{\mathsf W_1,\mathsf W_2+\mathsf W_3+\mathsf C_{23},\mathsf C_{12}+\mathsf C_{13}\}
=
\mathsf W_2+\mathsf W_3+\mathsf C_{23},
\]
and
\[
\max\{\mathsf W_3,\mathsf W_1+\mathsf W_2+\mathsf C_{12},\mathsf C_{23}+\mathsf C_{13}\}
=
\mathsf W_1+\mathsf W_2+\mathsf C_{12}.
\]
Thus, almost surely for all sufficiently large $N$,
\[
\dVD(A,B_1)=1-\frac{\mathsf W_2+\mathsf W_3+\mathsf C_{23}}{\mathsf M+1},
\qquad
\dVD(A,B_2)=1-\frac{\mathsf W_1+\mathsf W_2+\mathsf C_{12}}{\mathsf M+1}.
\]
Subtracting and taking limits gives
\[
\dVD(A,B_1)-\dVD(A,B_2)
\to
\frac{U_{12}-U_{23}}{T}
=
\delta_{\mathrm{VD}}^{\mathrm{inter}}
\quad \text{a.s.}
\]
Finally,
\[
U_{12}-U_{23}=W_1+\lambda_1\lambda_2q_{12}-W_3-\lambda_2\lambda_3q_{23},
\]
so the assumed inequality $W_1+\lambda_1\lambda_2q_{12}>W_3+\lambda_2\lambda_3q_{23}$ implies $\delta_{\mathrm{VD}}^{\mathrm{inter}}>0$. This proves the theorem.
\end{proof}

\section{Proof of Theorem \ref{thm:intra}}
\begin{proof}
Let $L=g_1\cup g_2$ and $R=g_3\cup g_4$. Let $\mathsf W_{11}$ and $\mathsf W_{22}$ denote the numbers of observed edges inside $g_1$ and $g_2$, respectively, let $\mathsf W_{33}$ and $\mathsf W_{44}$ denote the numbers of observed edges inside $g_3$ and $g_4$, respectively, let $\mathsf C_{12}$ and $\mathsf C_{34}$ denote the numbers of observed edges between $g_1,g_2$ and between $g_3,g_4$, respectively, and let $\mathsf C_{LR}$ denote the number of observed edges with one endpoint in $L$ and the other in $R$. Write
\[
\mathsf W_L=\mathsf W_{11}+\mathsf W_{22}+\mathsf C_{12},\qquad
\mathsf W_R=\mathsf W_{33}+\mathsf W_{44}+\mathsf C_{34},
\]
and
\[
\mathsf M=|E|=\mathsf W_L+\mathsf W_R+\mathsf C_{LR}.
\]

We normalize edge counts by $n_1n_2$. Since $r=n_1/n_2$ is fixed and $n_1,n_2\to\infty$, the strong law of large numbers gives
\[
\frac{\mathsf W_{11}}{n_1n_2}\to \frac{p_1r}{2},\qquad
\frac{\mathsf W_{22}}{n_1n_2}\to \frac{p_1}{2r},\qquad
\frac{\mathsf C_{12}}{n_1n_2}\to p_1\qquad \text{a.s.},
\]
and
\[
\frac{\mathsf W_{33}}{n_1n_2}\to \frac{p_2r}{2},\qquad
\frac{\mathsf W_{44}}{n_1n_2}\to \frac{p_2}{2r},\qquad
\frac{\mathsf C_{34}}{n_1n_2}\to p_2\qquad \text{a.s.}.
\]
Moreover,
\[
\frac{\mathsf W_L}{n_1n_2}\to p_1S_1(r),\qquad
\frac{\mathsf W_R}{n_1n_2}\to p_2S_1(r)\qquad \text{a.s.},
\]
where $S_1(r)=1+\{r+r^{-1}\}/2$, and
\[
\frac{\mathsf C_{LR}}{n_1n_2}\to qK(r)\qquad \text{a.s.},
\]
where $K(r)=r+r^{-1}+2$. Therefore, on an event of probability one that we fix throughout the proof,
\[
\frac{\mathsf M}{n_1n_2}\to D(r)=(p_1+p_2)S_1(r)+qK(r)>0.
\]

We now prove the three claims.

\medskip
\noindent\textit{Mirkin distance.}
The partitions $B_1$ and $B_2$ are refinements of $A$. Hence the Mirkin distance from $A$ to a refinement counts exactly the newly created cut edges. Under $B_1$, the only newly cut edges are those between $g_1$ and $g_2$, while under $B_2$ the only newly cut edges are those between $g_3$ and $g_4$. Thus
\[
\dMK(A,B_1)=\mathsf C_{12},\qquad \dMK(A,B_2)=\mathsf C_{34}.
\]
Therefore
\[
\dMK(A,B_1)-\dMK(A,B_2)=\mathsf C_{12}-\mathsf C_{34},
\]
and
\[
\frac{\dMK(A,B_1)-\dMK(A,B_2)}{n_1n_2}\to p_1-p_2\qquad \text{a.s.}
\]
Since $n_1n_2/N^2=r\{4(r+1)^2\}^{-1}$, this is equivalent to
\[
N^{-2}\{\dMK(A,B_1)-\dMK(A,B_2)\}\to \frac{r}{4(r+1)^2}(p_1-p_2)\qquad \text{a.s.},
\]
which is strictly positive when $p_1>p_2$.

\medskip
\noindent\textit{Graph-informed variation of information.}
Let $P_A=\map(A)$, $P_1=\map(B_1)$ and $P_2=\map(B_2)$ denote the marked edge partitions of $E^\star$. Under $A$, the marked edge partition has three blocks: the edges internal to $L$, the edges internal to $R$, and the marked cut block consisting of the edges across $L$ and $R$ together with $\star$.

We compute $\dVI(A,B_1)=\HH(P_1\mid P_A)+\HH(P_A\mid P_1)$. Under $B_1$, the block internal to $L$ is split into the edges inside $g_1$, the edges inside $g_2$, and the edges between $g_1$ and $g_2$; the last group is absorbed into the marked cut block of $P_1$ together with the original edges across $L$ and $R$ and the marker $\star$. Hence, for all sufficiently large $n_1,n_2$ on the fixed almost sure event,
\[
\HH(P_1\mid P_A)
=
\frac{\mathsf W_L}{\mathsf M+1}
\HH\left(
\frac{\mathsf W_{11}}{\mathsf W_L},
\frac{\mathsf W_{22}}{\mathsf W_L},
\frac{\mathsf C_{12}}{\mathsf W_L}
\right),
\]
and
\[
\HH(P_A\mid P_1)
=
\frac{\mathsf C_{12}+\mathsf C_{LR}+1}{\mathsf M+1}
h_2\left(
\frac{\mathsf C_{12}}{\mathsf C_{12}+\mathsf C_{LR}+1}
\right),
\]
where $h_2(x)=-x\log x-(1-x)\log(1-x)$. The $+1$ terms come from the marker in the cut block.

By the almost sure limits above,
\[
\frac{\mathsf W_L}{\mathsf M+1}
\HH\left(
\frac{\mathsf W_{11}}{\mathsf W_L},
\frac{\mathsf W_{22}}{\mathsf W_L},
\frac{\mathsf C_{12}}{\mathsf W_L}
\right)
\to
\frac{p_1S_1(r)}{D(r)}h_3(r)\qquad \text{a.s.},
\]
where
\[
h_3(r)=-u_0(r)\log u_0(r)-u_1(r)\log u_1(r)-u_2(r)\log u_2(r),
\]
with $u_0(r)=S_1(r)^{-1}$, $u_1(r)=r\{2S_1(r)\}^{-1}$ and $u_2(r)=\{2rS_1(r)\}^{-1}$. Also,
\[
\frac{\mathsf C_{12}+\mathsf C_{LR}+1}{\mathsf M+1}
h_2\left(
\frac{\mathsf C_{12}}{\mathsf C_{12}+\mathsf C_{LR}+1}
\right)
\to
\frac{p_1+qK(r)}{D(r)}
h_2\left(
\frac{p_1}{p_1+qK(r)}
\right)
\qquad \text{a.s.}
\]
Therefore
\[
\dVI(A,B_1)\to
\frac{\{p_1+qK(r)\}h_2[p_1/\{p_1+qK(r)\}]+p_1S_1(r)h_3(r)}{D(r)}
=
\frac{F_r(p_1)}{D(r)}
\qquad \text{a.s.}
\]

The same argument with the roles of $L$ and $R$ exchanged gives
\[
\dVI(A,B_2)\to
\frac{\{p_2+qK(r)\}h_2[p_2/\{p_2+qK(r)\}]+p_2S_1(r)h_3(r)}{D(r)}
=
\frac{F_r(p_2)}{D(r)}
\qquad \text{a.s.}
\]
Hence
\[
\dVI(A,B_1)-\dVI(A,B_2)
\to
\frac{F_r(p_1)-F_r(p_2)}{D(r)}
\qquad \text{a.s.}
\]

It remains to show that $F_r$ is strictly increasing. Write $c=qK(r)>0$ and $s=S_1(r)h_3(r)>0$, so that
\[
F_r(p)=(p+c)h_2\left(\frac{p}{p+c}\right)+sp.
\]
A direct derivative calculation gives
\[
\frac{d}{dp}\left[(p+c)h_2\left(\frac{p}{p+c}\right)\right]
=
\log\left(\frac{p+c}{p}\right),
\]
and therefore
\[
F_r'(p)=\log\left(\frac{p+c}{p}\right)+s>0
\]
for every $p>0$. Thus $F_r$ is strictly increasing, so $p_1>p_2$ implies $F_r(p_1)>F_r(p_2)$. Since $D(r)>0$, the limiting difference is strictly positive.

\medskip
\noindent\textit{Graph-informed van Dongen distance.}
We compute the van Dongen limits from the marked edge partitions. Since the ground set is $E^\star$, the denominator in the van Dongen distance is $\mathsf M+1$. Consider first $P_A=\map(A)$ and $P_1=\map(B_1)$. The partition $P_A$ has three blocks: the edges internal to $L$, the edges internal to $R$, and the marked cut block consisting of the edges across $L$ and $R$ together with $\star$. The partition $P_1$ has four blocks: the edges inside $g_1$, the edges inside $g_2$, the edges inside $R$, and the marked cut block consisting of the edges between $g_1$ and $g_2$, the edges across $L$ and $R$, and $\star$.

For the first max-sum, the block of $P_A$ internal to $L$ intersects the blocks of $P_1$ in three pieces with sizes $\mathsf W_{11}$, $\mathsf W_{22}$ and $\mathsf C_{12}$; hence its maximal intersection has limiting mass $p_1M_1(r)$, where
\[
M_1(r)=\max\left\{\frac r2,\ \frac{1}{2r},\ 1\right\}.
\]
The block of $P_A$ internal to $R$ intersects only the corresponding within-$R$ block of $P_1$, giving limiting mass $p_2S_1(r)$. The marked cut block of $P_A$ intersects only the marked cut block of $P_1$, with size $\mathsf C_{LR}+1$, giving limiting mass $qK(r)$. Therefore
\[
\frac{1}{n_1n_2}\sum_i\max_j |(P_A)_i\cap (P_1)_j|
\to
p_1M_1(r)+p_2S_1(r)+qK(r)
\qquad \text{a.s.}
\]

For the second max-sum, the within-$g_1$ block of $P_1$ contributes $\mathsf W_{11}$, the within-$g_2$ block contributes $\mathsf W_{22}$, the within-$R$ block contributes $\mathsf W_R$, and the marked cut block of $P_1$ contributes $\max\{\mathsf C_{12},\mathsf C_{LR}+1\}$. Hence
\[
\frac{1}{n_1n_2}\sum_j\max_i |(P_A)_i\cap (P_1)_j|
\to
\frac{p_1r}{2}+\frac{p_1}{2r}+p_2S_1(r)+\max\{p_1,qK(r)\}
\qquad \text{a.s.}
\]
Using $(\mathsf M+1)/(n_1n_2)\to D(r)$, we obtain
\[
\dVD(A,B_1)\to
\frac{p_1\Phi(r)+qK(r)-\max\{p_1,qK(r)\}}{D(r)}
\qquad \text{a.s.},
\]
where
\[
\Phi(r)=2+\frac{r+r^{-1}}{2}-M_1(r).
\]

The same calculation with $P_2=\map(B_2)$ gives
\[
\dVD(A,B_2)\to
\frac{p_2\Phi(r)+qK(r)-\max\{p_2,qK(r)\}}{D(r)}
\qquad \text{a.s.}
\]
Therefore
\[
\dVD(A,B_1)-\dVD(A,B_2)
\to
\frac{(p_1-p_2)\Phi(r)+\max\{p_2,qK(r)\}-\max\{p_1,qK(r)\}}{D(r)}
\qquad \text{a.s.}
\]

Under the additional condition $p_2>qK(r)$, and since $p_1>p_2$, we have
\[
\max\{p_2,qK(r)\}=p_2,\qquad \max\{p_1,qK(r)\}=p_1.
\]
Hence
\[
\dVD(A,B_1)-\dVD(A,B_2)
\to
\frac{(p_1-p_2)\{\Phi(r)-1\}}{D(r)}
\qquad \text{a.s.}
\]
Finally, $\Phi(r)>1$ for every $r>0$: if $1/2\leq r\leq2$, then $M_1(r)=1$ and $\Phi(r)=1+(r+r^{-1})/2>1$; if $r>2$, then $M_1(r)=r/2$ and $\Phi(r)=2+r^{-1}/2>1$; if $0<r<1/2$, then $M_1(r)=1/(2r)$ and $\Phi(r)=2+r/2>1$. Thus the limit is strictly positive. This completes the proof.
\end{proof}

\end{document}